\documentclass[12pt,reqno]{amsart}

\usepackage{times}
\usepackage{graphicx}
\vfuzz=2pt

\usepackage{amsmath,amsthm,amsfonts,amscd,amssymb, MnSymbol}
\usepackage{pst-node}
\usepackage{pst-plot}

\usepackage[all]{xy}

\usepackage{tikz, braids}
\usetikzlibrary{decorations.pathreplacing, arrows}

\newtheorem{thm}{Theorem}[section]
\newtheorem{cor}[thm]{Corollary}
\newtheorem{defn}[thm]{Definition}
\newtheorem{lem}[thm]{Lemma}
\newtheorem{rem}[thm]{Remark}
\newtheorem{prop}[thm]{Proposition}
\newtheorem{ex}[thm]{Example}

\newtheorem{notation}[thm]{Notation}
{ \theoremstyle{remark} }

\numberwithin{thm}{section}
\numberwithin{equation}{section}





\newcommand{\mc}{\mathcal}


\newcommand{\norm}[1]{\left\Vert#1\right\Vert}

\newcommand{\la}{\langle}
\newcommand{\ra}{\rangle}

\newcommand{\Comp}{\mathbb{C}}

\newcommand{\D}{\mathcal{D}}
\newcommand{\g}{\mathbb{G}}

\newcommand{\n}{\mathbb{N}}

\newcommand{\tor}{\mathbb{T}}

\newcommand{\z}{\mathbb{Z}}

\global\long\def\tp{\mathop{\xymatrix{*+<.7ex>[o][F-]{\scriptstyle \top}}
 } }

\newcommand{\QG}{\mathbb{G}}

\newcommand{\ch}{\hat{c}}

\newcommand{\xv}{\mathbf{x}}

\newcommand{\yv}{\mathbf{y}}


\begin{document}

\title{Quantum channels with quantum group symmetry}

\author {Hun Hee Lee}
\address{Hun Hee Lee, Department of Mathematical Sciences and the Research Institute of Mathematics, Seoul National University, Gwanak-ro 1, Gwanak-gu, Seoul 08826, Republic of Korea}
\email{hunheelee@snu.ac.kr}

\author{Sang-Gyun Youn}
\address{Sang-Gyun Youn, 
Department of Mathematics Education, Seoul National University, 
GwanAkRo 1, Gwanak-Gu, Seoul 08826, South Korea}
\email{s.youn@snu.ac.kr }

\maketitle

\begin{abstract}

In this paper we will demonstrate that any compact quantum group can be used as symmetry groups for quantum channels, which leads us to the concept of covariant channels. We, then, unearth the structure of the convex set of covariant channels by identifying all extreme points under the assumption of multiplicity-free condition for the associated fusion rule, which provides a wide generalization of the results of \cite{MSD17}. The presence of quantum group symmetry contrast to the group symmetry will be highlighted in the examples of quantum permutation groups and $SU_q(2)$. In the latter example, we will see the necessity of the Heisenberg picture coming from the non-Kac type condition. This paper ends with the covariance with respect to projective representations, which leads us back to Weyl covariant channels and its fermionic analogue.


\end{abstract}

\section{Introduction}

Conservation of symmetry has been one of the central themes in quantum theory. Symmetries are often described by group actions and it is natural to be interested in quantum objects, such as quantum states, invariant under these actions. Since the quantum states are operators acting on Hilbert spaces we can immediately move to the representation theory of groups for the candidates of these actions. Recent developments of quantum information theory (shortly, QIT) lead us to focus more on quantum systems whose state space, or the underlying Hilbert space is finite dimensional, which means that the representations we are interested in are finite dimensional ones. This is why we usually consider compact groups, where the theory of associated finite dimensional representations is rich.
More precisely we are interested in a quantum state $\rho \in B(H)$ which is invariant under a finite dimensional unitary representation $\pi: G \to B(H)$ for a compact group $G$, i.e. $\pi(x)\rho\,\pi(x)^* =  \rho$, $\forall x\in G$.

The QIT point of view provides us another important class of quantum objects, namely quantum channels. In finite dimensional setting, quantum channels are (in the Schr{\" o}dinger picture) completely positive and trace-preserving (CPTP) maps between matrix algebras, where compact groups naturally act. We say that a quantum channel is covariant if this action is preserved.

The above two symmetric objects, namely invariant states and covariant channels
have been extensively studied (in \cite{DKS05,DFH06,KW09,MW09,MS14,DTW16,WTB17,H17a,H17b} and so on). One surprising feature is that the usual connection between the two objects, namely the Choi-Jamio{\l}kowski map (shortly, CJ-map) still serves as the bridge between two symmetries once we focus on bipartite states with the canonical choice of group invariance reflecting the bipartite structure. More precisely, we are interested in a bipartite quantum state $\rho \in B(H\otimes K)$ such that $(u(x)\otimes v(x))\rho \, (u(x)\otimes v(x))^* = \rho$, $\forall x\in G$
for some finite dimensional unitary representations $u: G\to B(H)$ and $v:G\to B(K)$ of a compact group $G$. We say that such $\rho$ is $G_{(u,v)}$-invariant. We are also interested in a quantum channel $\Phi: B(\bar{H}) \to B(K)$ such that
$\Phi(\bar{u}(x)X\bar{u}(x)^*) = v(x)\Phi(X)v(x)^*$, $\forall x\in G$, which we call $G_{(\bar{u},v)}$-covariant, where $\bar{u}$ is the the conjugate representation of $u$. When there is no possibility of confusion we simply say $G$-invariant and $G$-covariant, respectively.

The concept of $G$-invariance of bipartite states even goes back to Werner's 1989 paper \cite{W89} (introducing Werner states) and a detailed explanation for the general case is in \cite{VW01}. Some of the follow-up researches \cite{K02,H18} observed that $G_{(u,v)}$-invariance can be transferred to $G_{(\bar{u},v)}$-covariance of channels via the CJ-map for certain special cases. This correspondence gives us Werner-Holevo channels (a counterexample on Amosov, Holevo and Werner's conjecture, see \cite{AHW00, WH02}) from Werner states \cite{W89} and depolarizing channels from isotropic states \cite{VW01}. These classes of channels are of prime importance in QIT, but the representations behind them were limited to the fundamental representations (and their conjugates) of the Lie groups $U(n)$ and $O(n)$, whose structures are relatively easy. It was the paper by Nuwairan \cite{AN14} whose main focus was to reveal the structure of all irreducibly $SU(2)$-covariant channels using a more involved representation theory. Nuwairan introduced the class of EPOSIC channels, which turns out to be the set of extreme points of all irreducibly $SU(2)$-covariant channels since their images through the CJ-map are proved to correspond to the set of extreme points of $SU(2)$-invariant states from \cite{VW01}.


The idea of EPOSIC channels was to focus on the irreducible decomposition of the tensor product of two irreducible representations of $SU(2)$, which produces intertwining isometries to be used as the Stinespring isometries for the EPOSIC channels. Understanding the structure of the tensor decomposition is a fundamental issue in the representation theory of compact groups and the details of the aforementioned intertwining isometries are encoded in so-called the ``{\em Clebsch-Gordan coefficients}". In \cite{BC18,BCLY20} the authors lifted the idea of EPOSIC channels to the case of general compact groups and even to their quantum counterpart, namely the case of compact quantum groups and introduced the class of {\em Clebsch-Gordan channels} (shortly, CG-channels). The initial motivation of CG-channels was to provide a large class of channels with interesting properties, which are essentially different from the known ones. It has been pointed out in \cite{BCLY20} that certain CG-channels satisfy covariance with respect to the underlying quantum group $\QG$, but the usage of $\QG$-covariance has been limited to obtain the bistochastic property of the corresponding CG-channels, and the study on the general structures of $\QG$-covariant channels has not been pursued. Moreover, the associated invariant states have been clarified only for the special class of CG-channels called {\it Temperley-Lieb channels}, whose underlying quantum groups have the same fusion rule as $SU(2)$.

It is the main purpose of this paper to prove that the connection between two symmetries through the CJ-map hold in full generality even for the quantum group symmetry under the assumption of multiplicity-free tensor decomposition of the associated representations. This assumption allows us to provide a simple characterization of extremal, invariant bipartite states, which was already suggested in \cite{MSD17} for the special case of finite groups. The key result here is that those extremal bipartite states can successfully be traced back to quantum channels, which is a highly non-trivial fact since the range of CJ-map (upto a scaling factor) does not cover all states. Moreover, it turns out that those channels are exactly CG-channels, which makes them into fundamental building blocks for covariant channels with respect to quantum group symmetry.

Once we include quantum groups for the possible symmetry groups, it is natural to be interested in whether we could find genuine quantum phenomena different from the classical group case. The first such phenomenon comes with the example of the permutation group $S_n$ and the quantum permutation group $S^+_n$ for $n\ge 4$ in Section \ref{sec-free-general}.
Here, the set of all $S_n$-covariant quantum channels with respect to its standard representation forms a 3-simplex, whilst we get a 2-simplex for the corresponding $S^+_n$-covariance. The second quantum phenomenon arises when we consider the symmetry (quantum) group $SU_q(2)$, which is one of the best-known quantum groups \cite{W87a}.
In this case we have some obstacles coming from so-called the non-Kac type property of $SU_q(2)$,
which forces us to use
partial {\em quantum trace} instead of the usual partial trace in the Stinespring procedure. The use of quantum trace is necessary to secure covariance, but also leads us to another obstacle that the resulting map (which we call the {\em Clebsch-Gordan map} in Definition \ref{def-CG-map}) is not trace-preserving (shortly, TP) in general. Fortunately, the resulting map is still unital upto a scaling factor, which allows us to use the model of quantum channels in the Heisenberg picture, namely unital completely positive (UCP) maps. Recall that the two pictures, Heisenberg's and Schr{\" o}dinger's are known to be equivalent via trace duality. When the underlying quantum group is of Kac type (such as $S^+_n$ or classical compact groups), this additional difficulty disappears since the quantum trace is the same as the usual trace. 

This paper is organized as follows: We gather basics of CJ-map and the representation theory of compact quantum groups in Section 2. We analyze the structure of the space of $\g$-invariant operators and prove that $\g$-covariance of linear maps transfers to $\g$-invariance of their CJ-matrices via CJ-map in Section 3. In section 4 we introduce Clebsch-Gordan maps (shortly, CG-maps) as the key ingredients for establishing that the above transference result remains true when we restrict our focus on the case of $\QG$-covariant channels under the multiplicity-free assumption. Section 5 deals with the behavior of CG-maps in terms of trace duality when the underlying quantum group is of Kac type, which explains why we do not need to summon the Heisenberg picture in this case. The last section is devoted to various concrete examples highlighting the consequences of the previous results. In fact, we establish a general framework of determining the set of all $\QG$-covariant channels in two steps, namely, (1) finding irreducible components out of the associated tensor product representations, and (2) specifying the corresponding CG-maps. We will explore the cases of quantum permutation group $S_n^+$ and the $q$-deformed quantum group $SU_q(2)$ exhibiting genuine quantum phenomena. The last collection of examples are covariance coming from projective representations. While these examples can be regarded as sub-cases of classical group symmetry, we would like to emphasize its connection to fundamental quantum systems such as finite Weyl systems and fermionic systems. This will lead us back to the well-known Weyl covariant channels and its fermionic analogue.

\section{Preliminaries}

\subsection{Choi-Jamio{\l}kowski map and trace duality}\label{subsec-Choi-adjoint}

In this paper all Hilbert spaces (denoted by $H_A$, $H_B$ and so on) are finite dimensional.

One of the fundamental object we will see in this paper is a linear map $\Phi:B(H_A) \rightarrow B(H_B)$. For a fixed choice of orthonormal basis $\{e_i\}^n_{i=1}$, $n = {\rm dim}(H_A)$ we consider the following linear isomorphism
    $$C: B(B(H_A),B(H_B)) \to B(\bar{H}_A\otimes H_B) \cong B(\bar{H}_A)\otimes B(H_B),\;\; \Phi \mapsto C_\Phi,$$
where
    $$C_{\Phi}= \sum_{i,j=1}^{n} e_{ij}\otimes \Phi(e_{ij}).$$
Here, $e_{ij}$ is the matrix unit associated with the fixed basis $\{e_i\}^n_{i=1}$ and $\bar{H}_A$ refers to the Hilbert space conjugate of $H_A$. We call the map $C$ and the operator $C_\Phi$ by the Choi-Jamio{\l}kowski map (shortly, CJ-map) and the CJ-matrix of $\Phi$, respectively. It is straightforward to check that
    \begin{align}\label{eq-CJ}
        \Phi(X) = ({\rm Tr}\otimes {\rm id})(C_\Phi(X^t \otimes {\rm Id})),\;\; X\in B(H_A),
    \end{align}
where $X^t$ refers to the transpose of $X$.    

The CJ-map gives a bijection between $CP(B(H_A),B(H_B))$, the set of all completely positive (shortly, CP) maps and the set of all positive definite matrices in $B(H_A\otimes H_B)$. Since $\Phi: B(H_A)\to B(H_B)$ is trace-preserving if and only if $({\rm id }\otimes {\rm Tr})(C_{\Phi})={\rm Id}_A$, which is a stronger condition than ${\rm Tr}(C_{\Phi})={\rm dim}(H_A)$, the map $\Phi\mapsto \displaystyle \frac{1}{{\rm dim}(H_A)}C_{\Phi}$ is an injective but not surjective mapping from $CPTP(B(H_A),B(H_B))$, the set of all CPTP maps into $\mathcal{D}(H_A\otimes H_B)$, the set of all states on $H_A\otimes H_B$. This is often called the {\it channel-state duality}.



The transition from the Schr{\" o}dinger picture into the Heisenberg picture is done via trace duality, which is based on the following two natural duality brackets on the matrix algebra $B(H)$. 
    $$\la X,Y\ra := {\rm Tr}(XY)\;\; \text{and}\;\; \llangle X,Y\rrangle := {\rm Tr}(X^tY),\; X,Y\in  B(H).$$

For a linear map $\Phi: B(H_A) \to B(H_B)$ we now have two associated adjoint maps $\Phi^*$ and $\Phi'$ given by $\la \Phi^*(Y), X\ra = \la Y, \Phi(X) \ra$ and $\llangle \Phi'(Y), X\rrangle = \llangle Y, \Phi(X)\rrangle$, respectively for $X\in B(H_A)$ and $Y\in B(H_B)$. We can easily check that
    $$\Phi'(X) = \Phi^*(X^t)^t,\;\; X\in B(H_B).$$

\subsection{Compact quantum group}

A compact group $G$ can be understood by the pair $(C(G), \Delta)$, where $C(G)$ is the space of all complex valued continuous functions on $G$ and $\Delta: C(G) \to C(G\times G)$ is the map given by $[\Delta (f)](s,t) = f(st)$, $s,t\in G$, encoding the group multiplication. There is a unique Borel probability measure $\mu$ on $G$ called the {\em Haar measure}, which is translation invariant. This produces a positive functional (called the {\em Haar functional}) on $C(G)$ given by $f \mapsto \int_G f\, d\mu$. This style of describing compact groups has a quantum counterpart leading us to the {\em compact quantum group}, which is given by a pair $(C(\g),\Delta)$. In this case $C(\g)$ is a unital $C^*$-algebra and $\Delta:C(\g)\rightarrow C(\g)\otimes C(\g)$ is a unital $*$-homomorphism, called the {\em co-multiplication}, satisfying the following properties:
\begin{enumerate}
    \item $(\Delta \otimes {\rm id})\Delta=({\rm id}\otimes \Delta)\Delta$.
    \item Both $\left \{ \Delta(a)(b\otimes 1):a,b\in C(\g) \right \}$ and $\left \{ \Delta(a)( 1 \otimes b ):a,b\in C(\g) \right \}$ span dense subspaces in $C(\g)\otimes  C(\g)$.
\end{enumerate}
Note that the minimal tensor product of $C^*$-algebras is used for the space $C(\g)\otimes C(\g)$.

There is a unique, unital, positive, linear functional $h:C(\g)\rightarrow \Comp$ satisfying
\[({\rm id}\otimes h)\Delta = h(\cdot )1 = (h\otimes {\rm id})\Delta,\]
which replaces the role of the Haar functional and is called the {\it Haar state}. See \cite{W87a,W87b,T08} for more details.

A (finite dimensional) {\em representation} of $\g$ is an invertible element
$$u= (u_{ij})^{n_u}_{i,j=1}= \sum_{i,j=1}^{n_u} e_{ij}\otimes u_{ij} \in B(H_u)\otimes C(\g)\cong M_{n_u}(C(\g))$$ 
satisfying
$$\Delta(u_{ij})=\sum_{k=1}^{n_u}u_{ik}\otimes u_{kj}\text{ for all }1\leq i,j\leq n_u.$$
Here, $n_u={\rm dim}(H_u)$ is the {\it (classical) dimension} of $u$. When the element $u$ is unitary, we say that $u$ is a {\em unitary representation}.

For a unitary representation $u$, we consider the {\it contragradient representation} $u^c$ given by 
$$u^c= (u_{ij}^*)_{i,j} =\displaystyle \sum_{i,j=1}^{n_u} e_{ij}\otimes u^*_{ij} \in B(\bar{H}_u)\otimes C(\g) \cong M_{n_u}(C(\QG)),$$
which is, in general, not unitary. However, there is a uniquely determined invertible positive matrix $Q_u \in B(\bar{H}_u)$ such that the functionals ${\rm Tr}(Q_u \, \cdot )$ and ${\rm Tr}(Q^{-1}_u \, \cdot)$ coincides on $B(\bar{H}_u)$ and the element
    $$\overline{u} = (Q^{\frac{1}{2}}_{u} \otimes 1)u^c(Q^{-\frac{1}{2}}_{u} \otimes 1)$$
is a unitary representation, which we call the {\em conjugate representation}. Here, the common value ${\rm Tr}(Q_u) = {\rm Tr}(Q^{-1}_u) = d_u$ is called the {\em quantum dimension} of $u$. For $u^t = (u_{ji})_{i,j}$ we have
    \begin{equation}\label{eq-contra-transpose}
    u^t(Q_{u} \otimes 1) u^c = Q_{u} \otimes 1,\;\; u^c(Q_{u}^{-1} \otimes 1)u^t  = Q^{-1}_{u} \otimes 1.    
    \end{equation}
From this, it is not difficult to check $Q_{\overline{u}} = Q^{-1}_{u}$. The matrix $Q_u$ can be associated with the positive functional ${\rm Tr}_{Q_u}$ on $B(H_u)$ given by
    $${\rm Tr}_{Q_u}(X) = {\rm Tr}(Q_u X), \; X\in B(H_u),$$
which we call a {\em quantum trace}. We say that $\QG$ is of {\em Kac type} if $Q_u = {\rm Id}_u$ for any unitary representation $u$ of $\QG$.

For two representations $u = (u_{ij})$ and $v = (v_{kl})$ of $\QG$ we define the {\em tensor product} $u \tp v$ by
    $$u \tp v := \sum_{i,j=1}^{n_u}\sum_{k,l=1}^{n_v}e_{ij}\otimes e_{kl}\otimes u_{ij}v_{kl}\in B(H_u)\otimes B(H_v)\otimes C(\g),$$
and the {\em direct sum} $u \oplus v \in (B(H_u) \oplus B(H_v))\otimes C(\g)$ in an obvious way.

We also consider the space of {\em intertwiners}
    $$\text{Hom}(u,v) := \{A\in B(H_u, H_v): v(A\otimes 1)=(A\otimes 1)u\}.$$
We say that $u$ and $v$ are {\em equivalent} (we write $u \cong v$) if there is an invertible operator in $\text{Hom}(u,v)$. We say that $u$ is {\it irreducible} if $\text{Hom}(u,u) = \Comp\cdot {\rm Id}_u$. The set of equivalence classes of irreducible unitary representations of $\QG$ will be denoted by ${\rm Irr}(\QG)$. We will always assume that for a given $\alpha \in {\rm Irr}(\QG)$ we fix a representative $u^\alpha \in B(H_\alpha) \otimes C(\QG)$. In this case, $Q_{u^\alpha}$, $n_{u^\alpha}$ and $d_{u^\alpha}$ will be simply denoted by $Q_\alpha$, $n_\alpha$ and $d_\alpha$, respectively.

We say that $u$ is a {\em subrepresentation} of $v$ (we write $u \subseteq v$) if there is another representation $w$ on $\QG$ such that $u \oplus w$ is equivalent to $v$.

Irreducible unitary representations play a vital role in understanding the underlying quantum group $\QG$. The linear space ${\rm Pol}(\g)$ spanned by $\{u^\alpha_{ij}: \alpha \in {\rm Irr}(\QG),\; 1\le i,j\le n_\alpha \}$ is dense in $C(\QG)$ and equipped with the {\it antipode} $S:{\rm Pol}(\g)\rightarrow {\rm Pol}(\g)$ given by 
    $$S(u^{\alpha}_{ij})=(u^{\alpha}_{ji})^*,\;\; \alpha\in {\rm Irr} (\g),\; 1\leq i,j\leq n_{\alpha},$$
which is an anti-multiplicative linear map. Moreover, we have the following {\it Schur-orthogonality} relations: for $\alpha,\beta \in {\rm Irr}(\QG)$ we have
    \begin{equation}
        h(u^\alpha_{ij}(u^\beta_{kl})^*) = \delta_{\alpha\beta}\delta_{ik}\frac{(Q_\alpha)_{lj}}{d_\alpha}\text{ and }h((u^{\alpha}_{ij})^*u^{\beta}_{kl})=\delta_{\alpha\beta}\delta_{jl}\frac{(Q_{\alpha}^{-1})_{ki}}{d_{\alpha}}.
    \end{equation}
Here, by choosing a suitable orthonormal basis of $H_{\alpha}$, we may assume that $Q_{\alpha}$ is diagonal \cite{D10}.

Every (finite dimensional) unitary representation of $\QG$ can be decomposed into a direct sum of irreducible ones. In this paper, we are interested in the irreducible decomposition of the tensor product of two unitary representations $u,v$ of $\QG$, namely $u\tp v \cong \oplus^N_{j=1} u^{\alpha_j}$. This decomposition, in general, allows repetition of the same representatives, i.e. you may have $\alpha_j = \alpha_k$ for $1\le j\ne k \le N$. We say that $u\tp v$ has a {\em multiplicity-free irreducibel decomposition} if all the elements $\alpha_j$, $1\le j\le N$ are distinct. We are particularly interested in the case when $u$ and $v$ are also irreducible, namely $u = u^\alpha$, $v = u^\beta$, $\alpha,\beta \in {\rm Irr}(\QG)$. Then we have $u^\gamma \subseteq u^\alpha \tp u^\beta$ for $\gamma = \alpha_j$, $1\le j\le N$, which we simply write
    $$\gamma \subseteq \alpha \tp \beta.$$
It is well-known that the above is equivalent to (see \cite[Section 3.1.3]{T08} for the details)
    $$\overline{\gamma}\subseteq \overline{\beta}\tp \overline{\alpha} \Leftrightarrow \beta \subseteq \gamma \tp \overline{\alpha}\Leftrightarrow \beta\subseteq \overline{\alpha}\tp \gamma \Leftrightarrow \overline{\beta} \subseteq \overline{\gamma} \tp \alpha.$$
In this case, we can find an isometric intertwiner $v^{\alpha,\beta}_\gamma \in {\rm Hom}(\gamma, \alpha \tp \beta)$, whose behavior with respect to the $Q$-matrices are given as follows (\cite[Remark 4.6]{FLS16}).
    \begin{equation}\label{Q-matrix-v}
        (Q_{\alpha} \otimes Q_{\beta}) v^{\alpha, \beta}_\gamma = v^{\alpha, \beta}_\gamma Q_\gamma.
    \end{equation}

\subsection{The quantum permutation group $S^+_n$ and the $q$-deformed quantum group $SU_q(2)$}

A compact quantum group $(C(\QG), \Delta)$ is called a compact matrix quantum group (see \cite[Proposition 6.1.4 i)]{T08} for the details) if there is a unitary $w=(w_{ij})_{1\leq i,j\leq n} \in M_n(C(\QG))$, called the {\em fundamental representation}, such that
\begin{enumerate}
    \item $w^c=(w^*_{ij})_{1\leq i,j\leq n}$ is invertible,
    \item the elements $w_{ij}$, $1\leq i,j\leq n$, generate $C(\g)$ and
    \item $\Delta(w_{ij})=\displaystyle \sum_{k=1}^n w_{ik}\otimes w_{kj}$, $1\leq i,j\leq n$,
\end{enumerate}
which can be regarded as a quantum analogue of compact Lie groups.  Let us collect basic materials for the compact matrix quantum groups $S^+_n$ and $SU_q(2)$, which will be used only in Section \ref{sec-free-general} and Section \ref{sec-SUq(2)}. 


\begin{ex}
The quantum permutation group $S_n^+$ ($n\geq 2$) was introduced in \cite{W98}. The underlying $C^*$-algebra $C(S_n^+)$ is given by the universal unital $C^*$-algebra generated by $n^2$ operators $u_{ij}$ satisfying 
\begin{enumerate}
    \item $w_{ij}^*=w_{ij}=w_{ij}^2$ for all $1\leq i,j\leq n$ and
    \item $\displaystyle \sum_{i=1}^n w_{ij}=1$ for all $1\leq i\leq n$ and $\displaystyle \sum_{j=1}^n w_{ij}=1$ for all $1\leq j\leq n$.
\end{enumerate} 
Then the unital $*$-homomorphism $\Delta$ determined by $w_{ij}\mapsto \displaystyle \sum_{k=1}^n w_{ik}\otimes w_{kj}$, $1\leq i,j\leq n$, turns $(C(S_n^+),\Delta)$ into a compact quantum group.

It is known \cite{B99} that ${\rm Irr}(S_n^+)$ is identified with $\left \{0,1,2,\cdots \right\}$ and 
    $$u^l\tp u^m \cong u^{l+m}\oplus u^{l+m-1}\oplus \cdots \oplus u^{|l-m|}.$$
Note that the {\it fundamental representation} $w=(w_{ij})_{1\leq i,j\leq n}$ is reducible with two invariant subspaces $H_0=\Comp\cdot \xi_0$ and $H_1=H_0^{\perp}$ for the invariant vector $\xi_0=\sum_{j=1}^n e_j$. The associated irreducible sub-representations are known to be equivalent to $u^0$ and $u^1$, respectively.
\end{ex}
 
 \begin{ex}
 The compact quantum group $SU_q(2)$ ($0<q<1$) was introduced in \cite{W87a}. The underlying $C^*$-algebra $C(SU_q(2))$ is the universal unital $C^*$-algebra generated by operators $a,c$ such that $\left [ \begin{array}{cc} a&-qc^*\\ c&a \end{array} \right ]$ is a unitary. Then, together with the unital $*$-homomorphism $\Delta$ determined by $\begin{array}{ll}a\mapsto& a\otimes a-qc^*\otimes c\\
 c\mapsto&c\otimes a+a^*\otimes c \end{array}$, the pair $(C(SU_q(2)),\Delta)$ satisfies the axioms to be a compact quantum group. It is known that ${\rm Irr}(SU_q(2))$ is identified with $\left \{0,1,2,\cdots\right\}$ with the fusion rule
 $$u^l\tp u^m\cong u^{l+m}\oplus u^{l+m-2}\oplus \cdots \oplus u^{|l-m|}.$$
 Note that $u^0 = 1$, the identity of the algebra $C(SU_q(2))$ and $u^1 = \left [ \begin{array}{cc} a&-qc^*\\ c&a \end{array} \right ] \in M_2(C(SU_q(2)))$, the fundamental representation.
 \end{ex}

\section{$\QG$-invariant operators and $\QG$-covariant linear maps}

We begin with $\QG$-invariant states or more generally $\QG$-invariant operators for a compact quantum group $\QG$.

\begin{defn} Let $u$ and $v$ be finite dimensional unitary representations of $\QG$.
An operator $X\in B(H_u)$ is called $\QG_u$-invariant if $X$ is an intertwiner of $u$, i.e. 
    $$u(X\otimes 1)u^*=X\otimes 1.$$
In particular, a bipartite operator $Y\in B(H_u\otimes H_v)$ is called $\QG_{(u,v)}$-invariant if
    \begin{equation}\label{eq-bipartite-inv}
      (u\tp v)(Y\otimes 1)(u \tp v)^*=Y\otimes 1.  
    \end{equation}
When $u,v$ are irreducible, i.e. $u=u^\alpha$ and $v=u^{\beta}$ for some $\alpha,\beta \in {\rm Irr}(\QG)$, we write ``$\QG_\alpha$-invariant" instead of ``$\QG_{u^\alpha}$-invariant" and ``$\g_{(\alpha,\beta)}$-invariant'' instead of ``$\g_{(u^{\alpha},u^{\beta})}$-invariant''.
\end{defn}

We will focus on $\QG_{(u,v)}$-invariance under the condition that $u \tp v$ has a multiplicity-free irreducible decomposition, which makes their structure easy to describe as we can see below.

\begin{prop}\label{prop-intertwiner}
Suppose that $u_1,u_2,\cdots, u_n$ are mutually inequivalent irreducible unitary representations of $\QG$ and set $u = \oplus_{j=1}^n u_j$. Then, $X\in B(H_u)$ is $\QG_u$-invariant if and only if $X\in {\rm span}\left \{ p_k \right\}_{1\leq k\leq n}$, where $p_k$ is the orthogonal projection onto the space $H_{u_k}$ from $\bigoplus_{k=1}^n H_{u_k}$. Moreover, the set of extreme points of the convex set of $\QG_u$-invariant states $\D(\QG_u)$ is
    $${\rm Ext}\D(\QG_u) = \left \{\frac{1}{n_k}p_k: 1\leq k\leq n\right \}.$$
\end{prop}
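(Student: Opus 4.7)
The plan is to reduce the claim to Schur's lemma for compact quantum groups applied blockwise to the decomposition $H_u=\bigoplus_{k=1}^n H_{u_k}$, and then to recognise the $\QG_u$-invariant states as a simplex.

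First I would unpack the invariance condition. Writing $X\in B(H_u)$ as a block matrix $X=(X_{lk})$ with $X_{lk}\in B(H_{u_k},H_{u_l})$, the equation $u(X\otimes 1)u^*=X\otimes 1$ is equivalent to $(X\otimes 1)u=u(X\otimes 1)$, and since $u=\bigoplus_k u_k$ acts blockwise, this is in turn equivalent to $u_l(X_{lk}\otimes 1)=(X_{lk}\otimes 1)u_k$ for every pair $(l,k)$. Hence $X_{lk}\in\Hom(u_k,u_l)$ for all $l,k$.

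Next I would apply Schur's lemma in the compact quantum group setting (a standard consequence of the Peter--Weyl theory, see \cite[Sect.~1.4]{T08}): for irreducible unitary representations, $\Hom(u_k,u_l)=0$ whenever $u_k\not\cong u_l$, and $\Hom(u_k,u_k)=\Comp\cdot\mathrm{Id}_{H_{u_k}}$. Since the $u_k$ are mutually inequivalent, only the diagonal blocks survive and each equals a scalar multiple of $\mathrm{Id}_{H_{u_k}}$. Therefore $X=\sum_{k=1}^n c_k p_k$ for some $c_1,\dots,c_n\in\Comp$. The converse is immediate from $u(p_k\otimes 1)=(p_k\otimes 1)u$, which follows from the block structure of $u$.

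For the extreme points, the set of $\QG_u$-invariant states is
\[
\D(\QG_u)=\Bigl\{\sum_{k=1}^n c_k p_k:\ c_k\geq 0,\ \sum_{k=1}^n c_k n_k=1\Bigr\},
\]
obtained by imposing positivity (equivalent to $c_k\geq 0$ since the $p_k$ are mutually orthogonal) and unit trace on the general form above. This is an affine image of the standard simplex in $\Real^n$ under the bijection $(c_1,\dots,c_n)\mapsto\sum_k c_k p_k$, so its extreme points are exactly the images of the $n$ vertices, namely the states with $c_k=1/n_k$ and all other coefficients zero, which yields $\mathrm{Ext}\,\D(\QG_u)=\{\frac{1}{n_k}p_k:1\leq k\leq n\}$.

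I do not expect any serious obstacle here: the only non-triviality is invoking Schur's lemma for $\QG$, which is already part of the standard toolkit recalled in Section~2.2. The rest is routine block-matrix bookkeeping and identification of a finite-dimensional simplex with its vertices.
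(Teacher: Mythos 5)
Your proof is correct, but the first half takes a different route from the paper. You reduce $u(X\otimes 1)u^*=X\otimes 1$ to $X\in\Hom(u,u)$, decompose into blocks $X_{lk}\in\Hom(u_k,u_l)$, and invoke Schur's lemma for compact quantum groups ($\Hom(u_k,u_l)=0$ for inequivalent irreducibles, $\Hom(u_k,u_k)=\Comp\cdot\mathrm{Id}$, the latter being literally the paper's definition of irreducibility); this is clean and purely structural, with the only external input being the standard vanishing of intertwiner spaces between inequivalent unitary irreducibles (which follows from the $C^*$-argument $T^*T\in\Hom(u_k,u_k)$, $TT^*\in\Hom(u_l,u_l)$ -- worth a sentence, since your citation carries the weight there). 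The paper instead averages: it applies $\mathrm{id}\otimes h$ (the Haar state) to the intertwining identity and uses the Schur orthogonality relations, which not only shows $X\in\Span\{p_k\}$ but also produces the explicit coefficients $X=\sum_k \frac{\mathrm{Tr}(XQ_{u_k})}{d_k}\,p_k$, i.e.\ it exhibits the invariant operator as the output of a twirling/conditional-expectation type projection; your approach does not give this formula, but the Proposition as stated does not need it. Your converse (blockwise commutation of $p_k$ with $u$) matches the paper's appeal to unitarity of the $u_k$, and your identification of the invariant states with an affine image of the standard simplex, with vertices $\frac{1}{n_k}p_k$, is essentially the paper's argument via orthogonality of the ranges of the $p_k$.
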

\begin{proof}
Let us denote by $e^{k}_{ij}$ the canonical matrix units of $B(H_{u_k})$, which allow us to write $\oplus_{k=1}^n u_k$ as $\displaystyle \sum_{k=1}^n \sum_{i,j=1}^{n_k} e^{k}_{ij}\otimes u^k_{ij}$. Then the intertwining property of $X\otimes 1$ becomes
    \begin{equation}\label{eq-intertwining}
        X\otimes 1=\sum_{k_1,k_2=1}^n \sum_{i,j=1}^{n_{k_1}}\sum_{k,l=1}^{n_{k_2}} e^{k_1}_{ij}Xe^{k_2}_{lk}\otimes u^{k_1}_{ij}(u^{k_2}_{kl})^*.
    \end{equation}

Now we assume that $X\otimes 1$ is an intertwiner of $\oplus_{j=1}^n u_j$. Taking ${\rm id}\otimes h$ on both sides of \eqref{eq-intertwining}, we obtain 
\begin{align*}
X&=\sum_{k=1}^n \sum_{i,j=1}^{n_k} X^k_{jj} e^k_{ii} \frac{(Q_{u_k})_{jj}}{d_k}=\sum_{k=1}^n \frac{{\rm Tr}(XQ_{u_k})}{d_k}p_k.
\end{align*}
by the Schur's orthogonality relation. The converse direction is obtained directly by the unitarity of $u_k$, $1\le k \le n$.

The last statement for states follows directly from the fact that the projections $p_k$, $1\le k \le n$, have orthogonal ranges, so that a state $\rho\in B(H_u)$ is $\QG_u$-invariant if and only if $\rho = \displaystyle \sum^n_{k=1}\frac{a_k}{n_k} p_k$ for some probability distribution $\{a_k\}^n_{k=1}$.
\end{proof}


We continue with $\QG$-covariant linear maps and their basic properties.
    \begin{defn}
    Let $u,v$ be finite dimensional unitary representations of $\QG$. We say that a linear map $\Phi: B(H_u) \to B(H_v)$ is $\QG_{(u,v)}$-covariant if
        $$(\Phi \otimes {\rm id})[u(X\otimes 1)u^*] = v(\Phi(X)\otimes 1)v^*,\;\; X\in B(H_u).$$
    \end{defn}

\begin{notation}
\begin{enumerate}
\item ${\rm Cov}_\QG(u,v)$, the space of all $\QG_{(u,v)}$-covariant maps.
\item ${\rm CPTPCov}_\QG(u,v)$, the convex set of all CPTP $\QG_{(u,v)}$-covariant maps.
\item ${\rm UCPCov}_\QG(u,v)$, the convex set of all UCP $\QG_{(u,v)}$-covariant maps.
\end{enumerate}
\end{notation}

We can immediately connect two symmetries via the CJ-map.

\begin{thm}\label{thm-cov-Choi}
Let $\Phi:B(H_u)\rightarrow B(H_v)$ be a linear map for finite dimensional unitary representations $u,v$ of $\QG$. Then $\Phi$ is $\QG_{(u,v)}$-covariant if and only if
\begin{equation}\label{eq2}
(u^c\tp v)^*(C_{\Phi}\otimes 1) (u^c\tp v)  = C_{\Phi}\otimes 1.
\end{equation}
In other words, $\Phi$ is $\QG_{(u,v)}$-covariant if and only if the operator $(Q_u^{-\frac{1}{2}}\otimes {\rm Id})C_{\Phi}(Q_u^{-\frac{1}{2}}\otimes {\rm Id})$ is $\QG_{(\overline{u},v)}$-invariant.
\end{thm}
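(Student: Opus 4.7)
The plan is to translate both the covariance condition and identity (\ref{eq2}) into explicit scalar identities among the matrix entries $u_{ij}$, $v_{ab}$ and the coefficients $\phi^{mn}_{ac}$ of $\Phi(e_{mn})=\sum_{a,c}\phi^{mn}_{ac}e_{ac}$, then pass between them using only the Schur orthogonality relations packaged in the unitarity of $v$.

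First I would fix orthonormal bases giving $u=\sum_{i,j}e_{ij}\otimes u_{ij}$ and $v=\sum_{a,b}e_{ab}\otimes v_{ab}$. By linearity, covariance is equivalent to $(\Phi\otimes{\rm id})(u(e_{mn}\otimes 1)u^{*})=v(\Phi(e_{mn})\otimes 1)v^{*}$ for every $m,n$, and a direct expansion of both sides yields the scalar system
\begin{equation*}
(\mathrm{A})\qquad \sum_{i,j}\phi^{ij}_{ac}\,u_{im}\,u_{jn}^{*}\;=\;\sum_{b,d}\phi^{mn}_{bd}\,v_{ab}\,v_{cd}^{*}\quad(\forall\,m,n,a,c).
\end{equation*}
A parallel coefficient computation of $(u^{c}\tp v)^{*}(C_{\Phi}\otimes 1)(u^{c}\tp v)$ in $B(\bar{H}_u)\otimes B(H_v)\otimes C(\QG)$, combining the entrywise formulas for $u^c\tp v$ and its adjoint with $C_{\Phi}=\sum\phi^{mn}_{ac}\,e_{mn}\otimes e_{ac}$, converts (\ref{eq2}) into
\begin{equation*}
(\mathrm{B})\qquad \sum_{m,n,p,a}\phi^{mn}_{pa}\,v_{pr}^{*}\,u_{mi}\,u_{nj}^{*}\,v_{ab}\;=\;\phi^{ij}_{rb}\quad(\forall\,i,j,r,b).
\end{equation*}

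To derive (B) from (A) I would left-multiply (A) by $v_{ar}^{*}$ and sum over $a$, then right-multiply the result by $v_{cb}$ and sum over $c$; the right-hand side telescopes twice via $\sum_{a}v_{ar}^{*}v_{ab}=\delta_{rb}$ and $\sum_{c}v_{cd}^{*}v_{cb}=\delta_{db}$ (both coming from $v^{*}v=1$), yielding (B) after an index rename. The converse is obtained by reversing the procedure, multiplying (B) by $v_{tr}$ on the left (summed over $r$) and $v_{us}^{*}$ on the right (summed over $s$), this time exploiting $vv^{*}=1$.

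The second formulation follows by substituting $u^{c}=(Q_{u}^{-1/2}\otimes 1)\,\bar{u}\,(Q_{u}^{1/2}\otimes 1)$, which gives $u^{c}\tp v=(Q_{u}^{-1/2}\otimes{\rm Id})\,(\bar{u}\tp v)\,(Q_{u}^{1/2}\otimes{\rm Id})$ since $Q_{u}$ acts only on the $\bar{H}_{u}$-leg and thus commutes with $v$ and with the $C(\QG)$-factor. Conjugating (\ref{eq2}) by $(Q_{u}^{1/2}\otimes{\rm Id})$ on both sides rearranges it into $(\bar{u}\tp v)^{*}(Y\otimes 1)(\bar{u}\tp v)=Y\otimes 1$ with $Y=(Q_{u}^{-1/2}\otimes{\rm Id})\,C_{\Phi}\,(Q_{u}^{-1/2}\otimes{\rm Id})$, and since $\bar{u}\tp v$ is unitary this is exactly the $\QG_{(\bar{u},v)}$-invariance of $Y$. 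The main obstacle lies in the step (A)$\Leftrightarrow$(B): the non-commutativity of $C(\QG)$ prevents free rearrangement of products like $v_{ar}^{*}u_{im}u_{jn}^{*}v_{cs}$, so one must insert the two unitarity relations $v^{*}v=1$ and $vv^{*}=1$ in precisely the right slots to move between the two identities.
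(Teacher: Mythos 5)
Your proposal is correct and takes essentially the same route as the paper: both arguments expand the covariance condition and identity \eqref{eq2} into matrix-coefficient identities and pass between them using only the unitarity of $v$ (the paper packages the bookkeeping via the slice functionals $\varphi_X(A)={\rm Tr}(AX^t)$, while you work with the explicit scalar systems (A) and (B)). The final reformulation as $\QG_{(\overline{u},v)}$-invariance is obtained in both cases from the identity $u^c\tp v=(Q_u^{-\frac12}\otimes {\rm Id}_v\otimes 1)(\overline{u}\tp v)(Q_u^{\frac12}\otimes {\rm Id}_v\otimes 1)$ together with unitarity of $\overline{u}\tp v$.
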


\begin{proof}
For any $X\in B(H_u)$, let us denote by $\varphi_X$ the linear functional given by $\varphi_X(A) = {\rm Tr}(A X^t)$. Then we have
\begin{align*}
&v^*(\Phi\otimes {\rm id})(u(X \otimes 1)u^*)v\\
&=\sum_{i,j,k,l=1}^{n_u} \sum_{p,q,r,s=1}^{n_v}e^v_{qp}\Phi(e^u_{ij} X e^u_{lk})e^v_{rs}\otimes v_{pq}^*u_{ij}u_{kl}^* v_{rs}\\
&=\sum e^v_{qp}(\varphi_{e^u_{ij}X e^u_{lk}}\otimes {\rm id})(C_{\Phi} )e^v_{rs}\otimes v_{pq}^*u_{ij}u_{kl}^* v_{rs}\\
&=\sum e^v_{qp}(\varphi_X\otimes {\rm id})((e^u_{ji}\otimes {\rm Id}_{v})C_{\Phi}(e^u_{kl}\otimes {\rm Id}_{v}) )e^v_{rs}\otimes v_{pq}^*u_{ij}u_{kl}^* v_{rs}\\
&=\sum (\varphi_X \otimes {\rm id})[(e^u_{ji}\otimes e^v_{qp})C_{\Phi} (e^u_{kl} \otimes e^v_{rs})]\otimes  v_{pq}^*u_{ij}u_{kl}^* v_{rs}\\
&=(\varphi_X\otimes {\rm id}\otimes {\rm id})((u^c\tp v)^* (C_{\Phi}\otimes 1) (u^c\tp v)).
\end{align*}
On the other hand we have $\Phi(X)\otimes 1= (\varphi_X \otimes {\rm id}\otimes {\rm id})(C_{\Phi}\otimes 1)$. Thus, we can see that $\QG_{(u,v)}$-covariance of $\Phi$ is the same as $(\varphi_X \otimes {\rm id}\otimes {\rm id})(C_{\Phi}\otimes 1) = v^*(\Phi\otimes {\rm id})(u(X \otimes 1)u^*)v$ for any $X \in B(H_u)$, which, in turn, is equivalent to the identity
\[(u^c\tp v)^* (C_{\Phi}\otimes 1) (u^c\tp v)= C_{\Phi}\otimes 1 .\]

The last conclusion immediately follows from the following identity
$$u^c\tp v = (Q_u^{-\frac{1}{2}}\otimes {\rm Id}_v\otimes 1)(\overline{u}\tp v)(Q_u^{\frac{1}{2}}\otimes {\rm Id}_v\otimes 1).$$

\end{proof}

\begin{rem}\label{rmk-surjectivity}

A new perspective on the channel-state duality comes from Theorem \ref{thm-cov-Choi}, which states that $\Phi\mapsto \frac{1}{{\rm dim}(H_A)}C_{\Phi}$ is an injective mapping from ${\rm CPTPCov}_{\g(u,v)}$ into $\displaystyle \mathcal{D}(\g_{(\bar u,v)})$ if $\g$ is of Kac type. A surprising feature is that we even get surjectivity of this correspondence under the multiplicty-free assumption, which we will endeavor in the next section.
\end{rem}

When the associated unitaries $u$ and $v$ are irreducible, i.e. $u=u^\alpha$ and $v=u^\beta$ for some $\alpha,\beta \in {\rm Irr}(\QG)$, then we will simply write $\QG_{(\alpha,\beta)}$-covariance instead of $\QG_{(u^\alpha,u^\beta)}$-covariance. In this case a linear map $\Phi: B(H_\alpha)\to B(H_\beta)$ being $\g_{(\alpha,\beta)}$-covariant clearly implies that $\Phi({\rm Id}_{\alpha})$ is $\g_{\beta}$-invariant, so we obtain the following.
\begin{prop}\label{prop-cov-unital}
Let $\alpha,\beta \in {\rm Irr}(\QG)$ and $\Phi$ be a $\QG_{(\alpha,\beta)}$-covariant map. Then we have
\[\Phi({\rm Id}_{\alpha})\in \Comp \cdot {\rm Id}_{\beta}.\]
\end{prop}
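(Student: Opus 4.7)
The plan is to read off the conclusion directly from the covariance identity by substituting $X = \mathrm{Id}_\alpha$ and exploiting irreducibility of $u^\beta$. Since the proposition is essentially a one-line consequence of the setup, the main ``obstacle'' is simply to unpack the definitions correctly.

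First I would take the defining identity of $\QG_{(\alpha,\beta)}$-covariance,
\[(\Phi \otimes \mathrm{id})\bigl[u^\alpha(X \otimes 1)(u^\alpha)^*\bigr] = u^\beta(\Phi(X) \otimes 1)(u^\beta)^*,\]
and evaluate it at $X = \mathrm{Id}_\alpha$. Unitarity of $u^\alpha$ collapses the left-hand side to $u^\alpha (u^\alpha)^* = \mathrm{Id}_{H_\alpha} \otimes 1$ inside the bracket, so $(\Phi \otimes \mathrm{id})[\mathrm{Id}_{H_\alpha} \otimes 1] = \Phi(\mathrm{Id}_\alpha) \otimes 1$.

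Comparing with the right-hand side then yields
\[\Phi(\mathrm{Id}_\alpha) \otimes 1 \;=\; u^\beta \bigl(\Phi(\mathrm{Id}_\alpha) \otimes 1\bigr)(u^\beta)^*,\]
which is exactly the statement that $\Phi(\mathrm{Id}_\alpha)$ is $\QG_\beta$-invariant in the sense of the definition preceding Proposition 3.1. Equivalently, $\Phi(\mathrm{Id}_\alpha) \in \mathrm{Hom}(u^\beta, u^\beta)$.

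Finally, since $\beta \in \mathrm{Irr}(\QG)$, the intertwiner algebra $\mathrm{Hom}(u^\beta, u^\beta)$ equals $\Comp \cdot \mathrm{Id}_\beta$ by the very definition of irreducibility recalled in Section 2.2. This forces $\Phi(\mathrm{Id}_\alpha) \in \Comp \cdot \mathrm{Id}_\beta$, as claimed. (Alternatively, one could invoke Proposition 3.1 with $n=1$, but the direct Schur-type argument is cleaner and needs no Haar-state computation.)
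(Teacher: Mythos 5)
Your proposal is correct and matches the paper's (implicit) argument: the paper asserts that covariance applied at $X=\mathrm{Id}_\alpha$ immediately shows $\Phi(\mathrm{Id}_\alpha)$ is $\QG_\beta$-invariant, and then irreducibility of $\beta$ (equivalently, $\mathrm{Hom}(u^\beta,u^\beta)=\Comp\cdot\mathrm{Id}_\beta$, i.e.\ Proposition \ref{prop-intertwiner} with $n=1$) gives the conclusion, exactly as you argue.
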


\section{Clebsch-Gordan maps and the structure of $\QG$-covariant maps}\label{sec-CG}


Looking back Proposition \ref{prop-intertwiner} we can immediately see that the operators $p_k$, the orthogonal projection onto the space $H_{u_k}$ from $\bigoplus_{k=1}^n H_{u_k} \cong \overline{u}\tp v$, are building blocks for $\QG_{(\overline{u}, v)}$-invariant operators. The main result of this section is the construction of linear maps whose CJ-matrices are exactly the operators $p_k$.

\begin{defn}\label{def-CG-map}
Let $\alpha,\beta,\gamma \in {\rm Irr}(\g)$ such that $\alpha\subseteq \beta\tp \gamma$. We define the linear map $\Phi^{\alpha \to \beta}_\gamma: B(H_\alpha) \to B(H_\beta)$ given by
    $$\Phi^{\alpha \to \beta}_\gamma(A) :=({\rm id}\otimes {\rm Tr}_{Q_{\gamma}})(v^{\beta,\gamma}_{\alpha} A (v^{\beta,\gamma}_{\alpha})^*),\;\; A\in B(H_\alpha).$$
We call the maps $\Phi^{\alpha \to \beta}_\gamma$ {\em Clebsch-Gordan map}s (shortly, CG-maps) on $\QG$.
\end{defn}

\begin{rem}
\begin{enumerate}
    \item The above construction was already considered to study irreducibly $SU(2)$-covariant channels in \cite{L78,LS14,AN14}.
    \item In \cite{BCLY20}, the usual trace ${\rm Tr}$ was used instead of the quantum trace ${\rm Tr}_{Q_{\gamma}}$ to construct {\it Clebsch-Gordan channels}. These maps are indeed quantum channels, but not $\g$-covariant in general. See Remark \ref{rmk-quantum-trace} for more details.
    \item The CG-map $\Phi^{\alpha\rightarrow \beta}_{\gamma}$ coincides with $\Phi^{\beta,\overline{\gamma}}_{\alpha}$ in \cite{BCLY20} if $\g$ is of Kac type, but our notation has the merit of visualizing the connection with its CJ-matrix and avoiding some confusions such as $\overline{\overline{\gamma}}$.
\end{enumerate}
\end{rem}

We collect some basic properties of Clebsch-Gordan maps below.
\begin{prop}\label{prop-CG-map-cov}
Suppose that $\alpha,\beta,\gamma\in {\rm Irr}(\g)$ with $\alpha\subseteq \beta\tp \gamma$.
    \begin{enumerate}
        \item The map $\Phi^{\alpha \to \beta}_\gamma$ is $\QG_{(\alpha,\beta)}$-covariant.
        \item The map $\Phi^{\alpha \to \beta}_\gamma$ is quantum trace preserving, i.e. ${\rm Tr}_{Q_\beta}(\Phi^{\alpha \to \beta}_\gamma(X)) = {\rm Tr}_{Q_\alpha}(X)$, $X\in B(H_\alpha)$.
        \item We have $\Phi^{\alpha \to \beta}_\gamma({\rm Id}_\alpha) = \frac{d_\alpha}{d_\beta}{\rm Id}_\beta$.
    \end{enumerate}
        
\end{prop}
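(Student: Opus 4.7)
The plan is to verify the three items in the order stated, since (3) can be deduced cleanly from (1) and (2) combined with Proposition \ref{prop-cov-unital}. The key ingredients will be the intertwining property of $v^{\beta,\gamma}_{\alpha}\in {\rm Hom}(\alpha,\beta\tp\gamma)$, the compatibility identity \eqref{Q-matrix-v} between the $Q$-matrices and $v^{\beta,\gamma}_{\alpha}$, and a preliminary \emph{invariance lemma} for the quantum trace, namely that for any unitary representation $u^\gamma$ of $\QG$ and any $T\in B(H_\gamma)$,
\begin{equation*}
({\rm Tr}_{Q_\gamma}\otimes {\rm id})\bigl(u^\gamma (T\otimes 1)(u^\gamma)^*\bigr)={\rm Tr}_{Q_\gamma}(T)\cdot 1.
\end{equation*}
This invariance is precisely the feature that forces us to use the quantum trace rather than the ordinary trace in Definition \ref{def-CG-map}, and I expect it to be the main technical obstacle, since it must be extracted from the characterization \eqref{eq-contra-transpose} of $Q_u$ together with the self-adjointness of $Q_\gamma$ (taken diagonal as indicated after the Schur orthogonality relations).

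For (1), I will substitute the definition of $\Phi^{\alpha\to\beta}_{\gamma}$ into $(\Phi^{\alpha\to\beta}_{\gamma}\otimes {\rm id})[u^\alpha(X\otimes 1)(u^\alpha)^*]$ and push $u^\alpha$ through $v^{\beta,\gamma}_{\alpha}$ via the intertwining identity $(v^{\beta,\gamma}_{\alpha}\otimes 1)u^\alpha=(u^\beta\tp u^\gamma)(v^{\beta,\gamma}_{\alpha}\otimes 1)$. This yields
\begin{equation*}
({\rm id}\otimes {\rm Tr}_{Q_\gamma}\otimes {\rm id})\bigl[(u^\beta\tp u^\gamma)(Y\otimes 1)(u^\beta\tp u^\gamma)^*\bigr], \qquad Y:=v^{\beta,\gamma}_{\alpha}X(v^{\beta,\gamma}_{\alpha})^*.
\end{equation*}
Applying the invariance lemma above in the $\gamma$-slot then collapses the $u^\gamma$-factors and leaves $u^\beta(\Phi^{\alpha\to\beta}_{\gamma}(X)\otimes 1)(u^\beta)^*$, which is precisely $\QG_{(\alpha,\beta)}$-covariance.

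For (2), I will use \eqref{Q-matrix-v} in the form $(Q_\beta\otimes Q_\gamma)v^{\beta,\gamma}_{\alpha}=v^{\beta,\gamma}_{\alpha}Q_\alpha$ and the isometric property of $v^{\beta,\gamma}_{\alpha}$ to compute
\begin{equation*}
{\rm Tr}_{Q_\beta}\bigl(\Phi^{\alpha\to\beta}_{\gamma}(X)\bigr)={\rm Tr}\bigl((Q_\beta\otimes Q_\gamma)v^{\beta,\gamma}_{\alpha}X(v^{\beta,\gamma}_{\alpha})^*\bigr)={\rm Tr}\bigl((v^{\beta,\gamma}_{\alpha})^*v^{\beta,\gamma}_{\alpha}Q_\alpha X\bigr)={\rm Tr}_{Q_\alpha}(X),
\end{equation*}
where I use the trace property in the middle step. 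Finally, for (3), item (1) combined with Proposition \ref{prop-cov-unital} forces $\Phi^{\alpha\to\beta}_{\gamma}({\rm Id}_\alpha)=c\cdot {\rm Id}_\beta$ for some $c\in\Comp$, and taking ${\rm Tr}_{Q_\beta}$ and invoking (2) together with ${\rm Tr}_{Q_\alpha}({\rm Id}_\alpha)=d_\alpha$ and ${\rm Tr}_{Q_\beta}({\rm Id}_\beta)=d_\beta$ pins down $c=d_\alpha/d_\beta$.
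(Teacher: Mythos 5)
Your proposal is correct and follows essentially the same route as the paper: the paper's proof of (1) is exactly your two steps (intertwine via $(v^{\beta,\gamma}_{\alpha}\otimes 1)u^\alpha=(u^\beta\tp u^\gamma)(v^{\beta,\gamma}_{\alpha}\otimes 1)$, then collapse the $u^\gamma$-factors), carried out in matrix coefficients using $(u^{\gamma})^t(Q_\gamma\otimes 1)(u^{\gamma})^c=Q_\gamma\otimes 1$, which is precisely your quantum-trace invariance lemma (stated as Lemma \ref{lem1}(1) in the paper and derived from \eqref{eq-contra-transpose} as you anticipate). Your arguments for (2) via \eqref{Q-matrix-v} and the isometry property, and for (3) via Proposition \ref{prop-cov-unital} plus quantum traces, coincide with the paper's.
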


\begin{proof}
(1) Let us write $v^{\beta,\gamma}_{\alpha} = v$ for simplicity. For any $X\in B(H_u)$, we have
\begin{align*}
& (\Phi^{\alpha \to \beta}_\gamma\otimes {\rm id})(u^{\alpha} (X \otimes 1)(u^{\alpha})^* )\\
&=({\rm id}\otimes {\rm Tr}_{Q_{\gamma}}\otimes {\rm id})[(v \otimes {\rm Id}) u^{\alpha}(X\otimes 1)(u^{\alpha})^*( v \otimes {\rm Id})^*]\\
&=({\rm id}\otimes {\rm Tr}_{Q_{\gamma}}\otimes {\rm id})[u^{\beta}\tp u^{\gamma}   (vXv^*\otimes 1) (u^{\beta}\tp u^{\gamma})^*]\\
&=\sum^{n_\beta}_{i,j,p,q=1}\sum^{n_\gamma}_{k,l,r,s=1} ({\rm id}\otimes {\rm Tr}_{Q_{\gamma}})((e^\beta_{ij}\otimes e^\gamma_{kl})vXv^* (e^\beta_{qp}\otimes e^\gamma_{sr})) \otimes u^{\beta}_{ij}u^{\gamma}_{kl}(u^{\gamma}_{rs})^*(u^{\beta}_{pq})^*\\
&= \sum_{ \substack{i,j,p,q \\ l,s}} e^\beta_{ij} ({\rm id}\otimes {\rm Tr})(({\rm Id}\otimes e^\gamma_{sl})v X v^* ) e^\beta_{qp} \otimes  u^{\beta}_{ij}[(u^{\gamma})^t (Q_\gamma\otimes 1)(u^{\gamma})^c]_{ls}(u^{\beta}_{pq})^*\\
&=\sum_{i,j,p,q} e^\beta_{ij}({\rm id}\otimes {\rm Tr}_{Q_\gamma})(vXv^*)e^\beta_{qp}\otimes  u^{\beta}_{ij}(u^{\beta}_{pq})^*\\
&=u^{\beta} (\Phi^{\alpha \to \beta}_\gamma\otimes {\rm id})(X\otimes 1) (u^{\beta})^*.
\end{align*}
On the second last equality, we used the fact that $(u^{\gamma})^t (Q_\gamma\otimes 1)(u^{\gamma})^c = Q_\gamma\otimes 1$.

\vspace{1cm}

(2) For $X\in B(H_\alpha)$ we have
    \begin{align*}
        {\rm Tr}_{Q_\beta}(\Phi^{\alpha \to \beta}_\gamma(X))
        &= ({\rm Tr}\otimes {\rm Tr})[(Q_\beta \otimes Q_\gamma)v^{\beta,\gamma}_\alpha X (v^{\beta,\gamma}_\alpha)^*]\\
        &= {\rm Tr}\left ( X (v^{\beta,\gamma}_\alpha)^*(Q_\beta \otimes Q_\gamma)v^{\beta,\gamma}_\alpha \right )\\
        & = {\rm Tr}\left (X (v^{\beta,\gamma}_\alpha)^*v^{\beta,\gamma}_\alpha Q_\alpha \right ) = {\rm Tr}_{Q_\alpha}(X),
    \end{align*}
where we used \eqref{Q-matrix-v} in the third equality.

\vspace{1cm}

(3) This is immediate from the above result and Proposition \ref{prop-cov-unital}.
\end{proof}

From the identity \eqref{eq-contra-transpose} we immediately get the following.

\begin{lem}\label{lem1}
\begin{enumerate}
\item For any $\alpha\in {\rm Irr}(\g)$ and $X\in B(H_{\alpha})$, we have
\begin{equation}
({\rm Tr}_{Q_{\alpha}}\otimes {\rm id})(u^{\alpha}(X\otimes 1)(u^{\alpha})^*)={\rm Tr}_{Q_{\alpha}}(X)\otimes 1.
\end{equation}
\item For any $\alpha\in {\rm Irr}(\g)$ we have
\begin{equation}
((u^{\alpha})^c\tp u^{\alpha})\left (\sum_{j=1}^{n_{\alpha}}(Q_{\alpha}^{-1})_{jj}|j\ra\otimes |j\ra \otimes 1 \right )=\sum_{j=1}^{n_{\alpha}}(Q_{\alpha}^{-1})_{jj}|j\ra\otimes |j\ra \otimes 1.
\end{equation}
\end{enumerate}
\end{lem}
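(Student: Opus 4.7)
Both identities are bookkeeping consequences of \eqref{eq-contra-transpose}, so my plan is to expand each side in matrix entries and invoke \eqref{eq-contra-transpose} at the decisive step. Throughout I will exploit the convention (recalled after the Schur orthogonality relations) that the basis of $H_\alpha$ is chosen so that $Q_\alpha$, and hence $Q_\alpha^{-1}$, is diagonal.

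For part (1), I would first expand
$$u^\alpha(X\otimes 1)(u^\alpha)^* = \sum_{i,j,k,l}e^\alpha_{ij}\,X\,e^\alpha_{lk}\otimes u^\alpha_{ij}(u^\alpha_{kl})^*,$$
apply ${\rm Tr}_{Q_\alpha}\otimes {\rm id}$, and rearrange. Using diagonality of $Q_\alpha$, the coefficient of the scalar $X_{jl}$ collapses to $\sum_c u^\alpha_{cj}(Q_\alpha)_{cc}(u^\alpha_{cl})^*$, which is precisely the $(j,l)$-entry of $u^t(Q_\alpha\otimes 1)u^c$. The first identity in \eqref{eq-contra-transpose} rewrites this as $(Q_\alpha)_{jl}\otimes 1 = \delta_{jl}(Q_\alpha)_{jj}\otimes 1$, and summing over $j,l$ yields ${\rm Tr}_{Q_\alpha}(X)\otimes 1$.

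For part (2), I would evaluate $((u^\alpha)^c\tp u^\alpha)$ on the vector $\sum_j(Q_\alpha^{-1})_{jj}|j\ra\otimes|j\ra\otimes 1$ component by component: the coefficient of $|i\ra\otimes|k\ra$ becomes $\sum_m (Q_\alpha^{-1})_{mm}(u^\alpha_{im})^*\,u^\alpha_{km}$. By diagonality of $Q_\alpha^{-1}$, this is exactly the $(i,k)$-entry of $u^c(Q_\alpha^{-1}\otimes 1)u^t$, and the second identity in \eqref{eq-contra-transpose} identifies it with $(Q_\alpha^{-1})_{ik}\otimes 1 = \delta_{ik}(Q_\alpha^{-1})_{ii}\otimes 1$, producing the right-hand side.

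The only subtlety is index bookkeeping, that is, correctly tracking which entry of $u^\alpha$, $(u^\alpha)^c$, or $(u^\alpha)^t$ appears at each slot, and making sure that the diagonality of $Q_\alpha$ is invoked so that off-diagonal contributions drop out. There is no genuine obstacle here: once the computation has been arranged so that the two halves of \eqref{eq-contra-transpose} can be read off directly, both assertions follow immediately.
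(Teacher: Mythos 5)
Your proposal is correct and follows exactly the route the paper intends: the paper derives the lemma "immediately" from \eqref{eq-contra-transpose}, and your entrywise expansion (using the convention that $Q_\alpha$ is diagonal) is precisely the omitted bookkeeping, with the first identity of \eqref{eq-contra-transpose} giving part (1) and the second giving part (2).
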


Now we are ready to determine the CJ-matrices of Clebsch-Gordan maps with multiplicity-free condition. The case of $SU(2)$ was proved in \cite[Proposition 4.5]{AN14} based on a detailed analysis of Clebsch-Gordan coefficients of $SU(2)$, which seems available only in rare cases. The case of free orthogonal quantum groups $O_N^+$ was proved in \cite[Theorem 3.3]{BCLY20} using diagrammatic calculus. 
Both proofs are quite different from our approach.

\begin{thm}\label{thm-Choi}
Suppose that $\overline{\alpha}\tp \beta$ has a multiplicity-free irreducible decomposition. Then, for any $\overline{\gamma} \subseteq \overline{\alpha}\tp \beta$, $\gamma \in {\rm Irr}(\QG)$, the CJ-matrix of $\Phi^{\alpha \to \beta}_\gamma$ is
    $$C_{\Phi^{\alpha \to \beta}_\gamma} = \frac{d_{\alpha}}{d_{\gamma}}(Q_{\alpha}^{\frac{1}{2}}\otimes {\rm Id}_{\beta})p^{\overline{\alpha},\beta}_{\overline{\gamma}}(Q_{\alpha}^{\frac{1}{2}}\otimes {\rm Id}_{\beta}),$$
where $p^{\overline{\alpha},\beta}_{\overline{\gamma}}$ is the orthogonal projection onto $H_{\overline{\gamma}}$ from $H_{\bar \alpha}\otimes H_{\beta}\cong \bar{H_{\alpha}}\otimes H_{\beta}$.
\end{thm}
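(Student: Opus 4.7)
The plan is to apply Theorem \ref{thm-cov-Choi} and Proposition \ref{prop-intertwiner} to reduce the statement to a computation of scalar coefficients, and then to use a one-dimensional Frobenius-reciprocity argument together with a norm comparison to pin them down. Concretely, by Proposition \ref{prop-CG-map-cov}(1), $\Phi^{\alpha\to\beta}_\gamma$ is $\g_{(\alpha,\beta)}$-covariant, so Theorem \ref{thm-cov-Choi} tells us
\[
\tilde C \;:=\; (Q_\alpha^{-\frac{1}{2}}\otimes {\rm Id}_\beta)\,C_{\Phi^{\alpha\to\beta}_\gamma}\,(Q_\alpha^{-\frac{1}{2}}\otimes {\rm Id}_\beta)
\]
is $\g_{(\bar\alpha,\beta)}$-invariant, and combining this with the multiplicity-free assumption and Proposition \ref{prop-intertwiner} applied to $u^{\bar\alpha}\tp u^\beta$ yields $\tilde C = \sum_{\bar\delta\subseteq\bar\alpha\tp\beta}c_\delta\,p^{\bar\alpha,\beta}_{\bar\delta}$. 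The claim is therefore equivalent to $c_\delta = (d_\alpha/d_\gamma)\,\delta_{\delta,\gamma}$.

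To compute the $c_\delta$, I would rewrite the Choi--Jamio{\l}kowski matrix in vector form as $C_\Phi = ({\rm id}\otimes\Phi)(|\Omega_\alpha\rangle\langle\Omega_\alpha|)$ where $|\Omega_\alpha\rangle := \sum_i |\bar i\rangle\otimes|i\rangle$. Substituting the definition of $\Phi^{\alpha\to\beta}_\gamma$ and conjugating by $Q_\alpha^{-\frac{1}{2}}\otimes{\rm Id}$ yields $\tilde C = ({\rm id}\otimes{\rm Tr}_{Q_\gamma})(|\Psi\rangle\langle\Psi|)$ with $|\Psi\rangle := ({\rm id}\otimes v^{\beta,\gamma}_\alpha)R_\alpha$ and $R_\alpha := (Q_\alpha^{-\frac{1}{2}}\otimes 1)|\Omega_\alpha\rangle$. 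A direct check using the Schur orthogonality of the Haar state (equivalently, the $Q$-equation \eqref{Q-matrix-v} with trivial target) shows that $R_\alpha$ is an intertwiner from the trivial representation to $u^{\bar\alpha}\tp u^\alpha$, and combined with the intertwining property of $v^{\beta,\gamma}_\alpha$ this makes $|\Psi\rangle$ an intertwiner of the trivial representation with $u^{\bar\alpha}\tp u^\beta\tp u^\gamma$.

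The multiplicity-free decomposition of $\bar\alpha\tp\beta$ then forces, via Frobenius reciprocity, the space of such intertwiners to be one-dimensional, identified with $\Hom(u^{\bar\gamma}, u^{\bar\alpha}\tp u^\beta)$ and spanned by $(v^{\bar\alpha,\beta}_{\bar\gamma}\otimes{\rm id}_\gamma)R_\gamma$ where $R_\gamma := (Q_\gamma^{-\frac{1}{2}}\otimes 1)|\Omega_\gamma\rangle$. Hence $|\Psi\rangle = c\,(v^{\bar\alpha,\beta}_{\bar\gamma}\otimes{\rm id}_\gamma)R_\gamma$ for some scalar $c$, and a norm comparison --- using $\|R_\alpha\|^2 = {\rm Tr}(Q_\alpha^{-1}) = d_\alpha$, $\|R_\gamma\|^2 = d_\gamma$, and the isometry of both $v$'s --- fixes $|c|^2 = d_\alpha/d_\gamma$. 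Applying $({\rm id}\otimes{\rm Tr}_{Q_\gamma})$ and using the identity $({\rm id}\otimes{\rm Tr}_{Q_\gamma})(R_\gamma R_\gamma^*) = {\rm Id}_{\bar\gamma}$ then collapses the expression to $\tilde C = (d_\alpha/d_\gamma)\,p^{\bar\alpha,\beta}_{\bar\gamma}$, from which the stated formula follows after re-conjugating by $Q_\alpha^{\frac{1}{2}}\otimes 1$.

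The main delicate point is the correct identification of the duality morphisms $R_\alpha$ and $R_\gamma$ in the non-Kac setting: the naive choice $|\Omega_\alpha\rangle$ does not intertwine $u^{\bar\alpha}\tp u^\alpha$, and locating the correct $Q^{-\frac{1}{2}}$-normalization requires combining \eqref{Q-matrix-v} with Schur orthogonality of the Haar state. Once this is settled, the key identity $({\rm id}\otimes{\rm Tr}_{Q_\gamma})(R_\gamma R_\gamma^*) = {\rm Id}_{\bar\gamma}$ drops out by a direct calculation in a $Q_\gamma$-diagonal basis, and the quantum-dimension ratio $d_\alpha/d_\gamma$ then falls out of the norm comparison.
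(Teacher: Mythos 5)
Your proposal is correct, but it reaches the conclusion by a genuinely different mechanism than the paper. You share the first reduction (Proposition \ref{prop-CG-map-cov}(1) plus Theorem \ref{thm-cov-Choi} and Proposition \ref{prop-intertwiner} under the multiplicity-free hypothesis), but from there the paper works at the level of matrix coefficients in $C(\QG)$: it proves $C(Q_\alpha^{-\frac12}\otimes{\rm Id})v^{\overline{\alpha},\beta}_{\delta}=0$ for $\delta\neq\overline{\gamma}$ by showing the relevant element is a combination of coefficients of $\overline{\gamma}$ (via Lemma \ref{lem1}) and invoking Schur orthogonality, and then fixes the scalar through the trace identity $d_\alpha={\rm Tr}(\Phi({\rm Id}_\alpha)Q_\beta^{-1})=\lambda d_\gamma$ using \eqref{Q-matrix-v}. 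You instead stay entirely inside the representation category: you write $C_\Phi=({\rm id}\otimes\Phi)(|\Omega_\alpha\ra\la\Omega_\alpha|)$, recognize $R_\alpha=(Q_\alpha^{-\frac12}\otimes1)|\Omega_\alpha\ra$ as the invariant vector of $u^{\overline{\alpha}}\tp u^{\alpha}$ (this is exactly Lemma \ref{lem1}(2) transported from $(u^\alpha)^c$ to $\overline{u^\alpha}$ by conjugating with $Q_\alpha^{\frac12}$, so your normalization is right, with $\|R_\alpha\|^2=d_\alpha$), use one-dimensionality of the invariant subspace of $\overline{\alpha}\tp\beta\tp\gamma$ to identify $|\Psi\ra$ with a multiple of $(v^{\overline{\alpha},\beta}_{\overline{\gamma}}\otimes{\rm id})R_\gamma$, get the scalar $d_\alpha/d_\gamma$ from a norm comparison, and collapse via $({\rm id}\otimes{\rm Tr}_{Q_\gamma})(R_\gamma R_\gamma^*)={\rm Id}$ — all of which checks out, and in fact makes the preliminary reduction to a span of projections logically unnecessary. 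The only point you should make explicit is the Frobenius-reciprocity dimension count $\dim{\rm Hom}(1,\overline{\alpha}\tp\beta\tp\gamma)=\dim{\rm Hom}(\overline{\gamma},\overline{\alpha}\tp\beta)=1$: the paper's preliminaries record only the containment equivalences, but this follows readily by decomposing $\overline{\alpha}\tp\beta\cong\oplus_j\overline{\gamma_j}$ and using that the trivial representation occurs in $\overline{\delta}\tp\gamma$ exactly when $\delta\cong\gamma$, with multiplicity one. In exchange for that standard fact, your route determines both the surviving projection and the constant in one stroke and is arguably more conceptual (conjugate equations and duality vectors), whereas the paper's argument is a self-contained computation with Schur orthogonality and a separate trace evaluation for the constant.
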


\begin{proof}

Let $C$ be the CJ-matrix of $\Phi^{\alpha \to \beta}_\gamma$. Then Theorem \ref{thm-cov-Choi}, Proposition \ref{prop-intertwiner} and Proposition \ref{prop-CG-map-cov} tell us that  $(Q_{\alpha}^{-\frac{1}{2}}\otimes {\rm Id})C (Q_{\alpha}^{-\frac{1}{2}}\otimes {\rm Id})$ is a linear combination of orthogonal projections $p^{\overline{\alpha},\beta}_{\delta}$, where $\delta \subseteq \overline{\alpha}\tp \beta$, $\delta \in {\rm Irr}(\QG)$. In order for the desired conclusion we will prove that $C(Q_{\alpha}^{-\frac{1}{2}}\otimes {\rm Id})v^{\overline{\alpha},\beta}_{\delta}=0$ for any $\delta\neq \overline{\gamma}$ in ${\rm Irr}(\g)$, which is the same as
    $$0 = (v^{\overline{\alpha},\beta}_{\delta} )^*(Q_{\alpha}^{-\frac{1}{2}}\otimes {\rm Id})C\otimes 1 =  (u^{\delta})^*u^{\delta}\left[(v^{\overline{\alpha},\beta}_{\delta} )^*(Q_{\alpha}^{-\frac{1}{2}}\otimes {\rm Id})C\otimes 1\right] = (u^{\delta})^*X.$$
The element $X = u^{\delta}\left[(v^{\overline{\alpha},\beta}_{\delta} )^*(Q_{\alpha}^{-\frac{1}{2}}\otimes {\rm Id})C\otimes 1\right]$ in the above is a linear combination of the matrix coefficients of the representation $\delta$. We will show that they can be written as a linear combination of the matrix coefficients of the representation $\overline{\gamma}$. First we note
\begin{align*}
X &= ((v^{\overline{\alpha},\beta}_{\delta})^*\otimes 1) (u^{\overline{\alpha}} \tp u^{\beta})((Q_{\alpha}^{-\frac{1}{2}}\otimes {\rm Id})C\otimes 1)\\
&=((v^{\overline{\alpha},\beta}_{\delta})^*(Q_{\alpha}^{\frac{1}{2}}\otimes {\rm Id})\otimes 1) ((u^{\alpha})^c \tp u^{\beta})((Q_{\alpha}^{-1}\otimes {\rm Id})C\otimes 1)
\end{align*}
Moreover, we have
\begin{align*}
&((u^{\alpha})^c \tp u^{\beta})( (Q_{\alpha}^{- 1}\otimes {\rm Id})C\otimes 1)\\
&=\sum_{i,j=1}^{n_{\alpha}} ((u^{\alpha})^c)_{13}(u^{\beta})_{23} (Q_{\alpha}^{-1}e^\alpha_{ij}\otimes \Phi^{\alpha \to \beta}_\gamma(e^\alpha_{ij})  \otimes 1)\\
&=\sum_{i,j} ((u^{\alpha})^c)_{13}(u^{\beta})_{23}({\rm id}\otimes {\rm id}\otimes {\rm Tr}_{Q_\gamma}\otimes {\rm id})(Q_{\alpha}^{-1}e^\alpha_{ij}\otimes v^{\beta,\gamma}_{\alpha}e^\alpha_{ij}(v^{\beta,\gamma}_{\alpha})^* \otimes 1)\\
&=\sum_{i,j} {\rm Tr}^3_{Q_\gamma} \left ( ((u^{\alpha})^c)_{14}(u^{\beta})_{24} \left [Q_{\alpha}^{-1}e^\alpha_{ij}\otimes v^{\beta,\gamma}_{\alpha}e^\alpha_{ij}(v^{\beta,\gamma}_{\alpha})^* \otimes 1 \right ]\right )\\
& =\sum_{i,j} {\rm Tr}^3_{Q_\gamma} \left ( ((u^{\alpha})^c)_{14}(u^{\beta})_{24} (u^{\gamma})_{34} \left [Q_{\alpha}^{-1}e^\alpha_{ij}\otimes v^{\beta,\gamma}_{\alpha}e^\alpha_{ij}(v^{\beta,\gamma}_{\alpha})^* \otimes 1 \right ] \left  (u^{\gamma}\right )_{34}^*\right ) \\
& =\sum_{i,j} {\rm Tr}^3_{Q_\gamma} \left ( ((u^{\alpha})^c)_{14} \left [ Q_{\alpha}^{-1}e^\alpha_{ij}\otimes (v^{\beta,\gamma}_{\alpha}\otimes 1)u^{\alpha}(e^\alpha_{ij}(v^{\beta,\gamma}_{\alpha})^*\otimes 1) \right ]  \left  (u^{\gamma}\right )_{34}^*\right ) \\
& =\sum_{i,j} {\rm Tr}^3_{Q_\gamma} \left (  ({\rm id}\otimes v^{\beta,\gamma}_{\alpha}\otimes 1)((u^{\alpha})^c)_{13}(u^{\alpha})_{23} \left [ Q_{\alpha}^{-1}e^\alpha_{ij}\otimes e^\alpha_{ij}(v^{\beta,\gamma}_{\alpha})^*\otimes 1 \right ]  \left  (u^{\gamma}\right )_{34}^*\right ) \\
& =\sum_{i,j} {\rm Tr}^3_{Q_\gamma} \left (  ({\rm id}\otimes v^{\beta,\gamma}_{\alpha}\otimes 1) \left [ Q_{\alpha}^{-1}e^\alpha_{ij}\otimes e^\alpha_{ij}(v^{\beta,\gamma}_{\alpha})^*\otimes 1 \right ]  \left  (u^{\gamma}\right )_{34}^*\right ).
\end{align*}
Here, we write ${\rm Tr}^3_{Q_\gamma} = ({\rm id}\otimes {\rm id}\otimes {\rm Tr}_{Q_\gamma}\otimes {\rm id})$ for simplicity. The above fourth equality is due to (1) of Lemma \ref{lem1} and last equality is obtained by (2) of Lemma \ref{lem1}. Now Schur orthogonality tells us that $X$ must be zero unless $\delta = \overline{\gamma}$, the conclusion we wanted.

The last step is to determine the scalar $\lambda\in \Comp$ satisfying
    $$(Q_{\alpha}^{-\frac{1}{2}}\otimes {\rm Id})C(Q_{\alpha}^{-\frac{1}{2}}\otimes {\rm Id})=\lambda p^{\overline{\alpha},\beta}_{\overline{\gamma}}.$$
Note that $v^{\overline{\alpha},\beta}_{\overline{\gamma}}Q_{\gamma}^{-1}=(Q_{\alpha}^{-1}\otimes Q_{\beta})v^{\overline{\alpha},\beta}_{\overline{\gamma}}$ implies $v^{\overline{\alpha},\beta}_{\overline{\gamma}}Q_{\gamma}=(Q_{\alpha}\otimes Q_{\beta}^{-1})v^{\overline{\alpha},\beta}_{\overline{\gamma}}$. Thus, we have
\begin{align*}
d_{\alpha}&={\rm Tr}\left (\frac{d_{\alpha}}{d_{\beta}}Q_{\beta}^{-1}\right )={\rm Tr}(\Phi({\rm Id}_{n_{\alpha}})Q_{\beta}^{-1})=({\rm Tr}\otimes {\rm Tr}_{Q_{\beta}^{-1}})(C)\\
&=\lambda ({\rm Tr}\otimes {\rm Tr})((Q_{\alpha}\otimes Q_{\beta}^{-1})p^{\overline{\alpha},\beta}_{\overline{\gamma}})=\lambda{\rm Tr}(v^{\overline{\alpha},\beta}_{\overline{\gamma}}Q_{\gamma} (v^{\overline{\alpha},\beta}_{\overline{\gamma}})^*)=\lambda d_{\gamma},
\end{align*}
which is the conclusion we wanted.
\end{proof}

Combining all the above we can determine the set of $\QG_{(\alpha,\beta)}$-covariant linear maps and characterize all UCP maps inside of it. The case of CPTP $\g_{(\alpha,\beta)}$-covariant maps is more involved, but a similar conclusion holds if $\g$ is of Kac type.

\begin{thm}\label{thm-classification}
Suppose that the irreducible decomposition $\overline{\alpha}\tp \beta \cong \oplus_{j=1}^n \overline{ \gamma_j}$ is multiplicity free. Then we have the following:
    \begin{enumerate}
        \item The set $\{\Phi^{\alpha \to \beta}_\gamma: \overline{\gamma} \subseteq \overline{\alpha}\tp \beta\}$ of CG-maps is a basis for the linear space $\displaystyle {\rm Cov}_\QG(\alpha,\beta)$.
        
        \item $\displaystyle {\rm CPCov}_\QG(\alpha,\beta) = \left\{\sum_{j=1}^n a_j \Phi^{\alpha \to \beta}_{\gamma_j}: a_j\ge 0,\; 1\le j \le n\right\}$.
        
        \item $\displaystyle {\rm Ext}({\rm UCPCov}_\QG(\alpha,\beta)) = \left\{\frac{d_\beta}{d_\alpha}\Phi^{\alpha \to \beta}_\gamma: \overline{\gamma} \subseteq \overline{\alpha}\tp \beta\right\}$.
        
        \item $\displaystyle {\rm Ext}({\rm CPTPCov}_\QG(\alpha,\beta)) = \{\Phi^{\alpha \to \beta}_\gamma: \overline{\gamma} \subseteq \overline{\alpha}\tp \beta\}$ if, in addition, $\QG$ is of Kac type.
    \end{enumerate}
\end{thm}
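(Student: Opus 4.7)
\noindent\textbf{Proof plan for Theorem \ref{thm-classification}.}

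The plan is to reduce all four statements to the structural results already established (Theorem \ref{thm-cov-Choi}, Proposition \ref{prop-intertwiner}, Theorem \ref{thm-Choi}, and Proposition \ref{prop-CG-map-cov}), so that (1) drops out of a dimension count, (2) from positivity of orthogonal projections, and (3)--(4) from one affine constraint imposed on top of (2).

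For \emph{part (1)}, I would first invoke Theorem \ref{thm-cov-Choi} to set up the linear bijection
\[
\Phi \longleftrightarrow (Q_\alpha^{-1/2}\otimes {\rm Id})\,C_\Phi\,(Q_\alpha^{-1/2}\otimes {\rm Id})
\]
between ${\rm Cov}_\QG(\alpha,\beta)$ and the space of $\QG_{(\bar\alpha,\beta)}$-invariant operators. Under the multiplicity-free hypothesis, Proposition \ref{prop-intertwiner} identifies the latter space with the $n$-dimensional span of the mutually orthogonal projections $\{p^{\bar\alpha,\beta}_{\bar\gamma_j}\}_{j=1}^{n}$ onto the isotypic summands $H_{\bar\gamma_j}\subseteq H_{\bar\alpha}\otimes H_\beta$. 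By Theorem \ref{thm-Choi}, the CJ-matrix of $\Phi^{\alpha\to\beta}_{\gamma_j}$ pulls back under this bijection to a nonzero multiple of $p^{\bar\alpha,\beta}_{\bar\gamma_j}$, so the CG-maps are linearly independent and form a basis, matching the dimension of the ambient space.

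For \emph{part (2)}, a covariant map $\Phi=\sum_j a_j\Phi^{\alpha\to\beta}_{\gamma_j}$ is CP iff $C_\Phi\ge 0$. By Theorem \ref{thm-Choi},
\[
C_\Phi \;=\; (Q_\alpha^{1/2}\otimes {\rm Id})\Bigl(\sum_{j=1}^{n} a_j\tfrac{d_\alpha}{d_{\gamma_j}}\,p^{\bar\alpha,\beta}_{\bar\gamma_j}\Bigr)(Q_\alpha^{1/2}\otimes {\rm Id}).
\]
Since conjugation by the invertible positive operator $Q_\alpha^{1/2}\otimes {\rm Id}$ preserves positivity, and since the projections $p^{\bar\alpha,\beta}_{\bar\gamma_j}$ have pairwise orthogonal ranges (by Schur applied to the multiplicity-free decomposition $\bar\alpha\tp\beta \cong \bigoplus_j \bar\gamma_j$), the middle operator is positive iff every coefficient $a_j\tfrac{d_\alpha}{d_{\gamma_j}}$ is nonnegative, i.e.\ iff $a_j\ge 0$ for all $j$.

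For \emph{parts (3) and (4)}, I would overlay one additional affine constraint on (2). Using Proposition \ref{prop-CG-map-cov}(3), for $\Phi=\sum_j a_j \Phi^{\alpha\to\beta}_{\gamma_j}$ one has $\Phi({\rm Id}_\alpha) = \bigl(\sum_j a_j\bigr)\tfrac{d_\alpha}{d_\beta}{\rm Id}_\beta$, so unitality is equivalent to $\sum_j a_j = d_\beta/d_\alpha$; hence ${\rm UCPCov}_\QG(\alpha,\beta)$ is the $(n-1)$-simplex with vertices $\tfrac{d_\beta}{d_\alpha}\Phi^{\alpha\to\beta}_{\gamma_j}$. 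For (4), when $\QG$ is of Kac type the quantum trace coincides with the usual trace, so Proposition \ref{prop-CG-map-cov}(2) says each CG-map is trace-preserving, and therefore $\Phi$ is TP iff $\sum_j a_j = 1$, giving the simplex with vertices $\Phi^{\alpha\to\beta}_{\gamma_j}$. The main (minor) obstacle I foresee is simply being careful that the orthogonality of the ranges of $p^{\bar\alpha,\beta}_{\bar\gamma_j}$ is what makes the positivity criterion diagonal in step~(2)---once that is in hand, parts (3) and (4) are one-line slicings of the cone described in (2).
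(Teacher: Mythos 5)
Your proposal is correct and follows essentially the same route as the paper: transfer to CJ-matrices via Theorem \ref{thm-cov-Choi} and Proposition \ref{prop-intertwiner}, identify the CG-maps with the orthogonal projections via Theorem \ref{thm-Choi} (giving the basis and the positivity criterion), and then cut the CP cone by the unitality, respectively trace-preserving, affine constraint using Proposition \ref{prop-CG-map-cov}. The only point worth making explicit is that the extremality of the vertices in (3) and (4) rests on the linear (hence affine) independence established in (1), which the paper states and your simplex description uses implicitly.
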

\begin{proof}
\begin{enumerate}
\item Theorem \ref{thm-classification} states that the CJ-matrix of $\Phi = \sum_{j=1}^n a_j \Phi^{\alpha \to \beta}_{\gamma_j}$ is
\[ C_\Phi = \sum_{j=1}^n a_j \cdot \frac{d_{\alpha}}{d_{\gamma}}(Q_{\alpha}^{\frac{1}{2}}\otimes {\rm Id}_{\beta})p^{\overline{\alpha},\beta}_{\overline{\gamma_j}}(Q_{\alpha}^{\frac{1}{2}}\otimes {\rm Id}_{\beta}),\]
and any $u^c\tp v$-invariant operator should be of the above form by Theorem \ref{thm-cov-Choi} and Proposition \ref{prop-intertwiner}. Moreover, orthogonality between the associated projections $p^{\overline{\alpha},\beta}_{\overline{\gamma}}$ is transferred to linear independence of the set of CG-maps $\Phi^{\alpha\rightarrow \beta}_{\gamma}$.
\item Positivity of the coefficients $a_j$ follows from orthogonality between the associated projections $p^{\overline{\alpha},\beta}_{\overline{\gamma_j}}$.
\item The above (2) says that any element in ${\rm CPCov}_{\g}(\alpha,\beta)$ is written as $\Phi = \sum_{j=1}^n \frac{b_jd_{\beta}}{d_{\alpha}} \Phi^{\alpha \to \beta}_{\gamma_j}$ with $b_j\geq 0$ for all $j$, and $\Phi$ is unital iff $\displaystyle\sum_{j=1}^n b_j=1$ thanks to Proposition \ref{prop-CG-map-cov} (3). This means that ${\rm CPCov}_{\g}(\alpha,\beta)$ is the set of convex combinations of $\Phi^{\alpha \to \beta}_{\gamma_j}$ and, moreover, the CG-maps are actually all extreme points due to their linear independence.
\item Note that $\Phi^{\alpha\rightarrow \beta}_{\gamma}$ is trace-preserving if $\g$ is of Kac type. Thus, the set ${\rm CPTPCov}_{\g}(\alpha,\beta)$ is $\left \{\sum_{j=1}^n a_j \Phi^{\alpha \to \beta}_{\gamma_j}: \sum_j a_j=1\text{ with }a_j\ge 0\text{ for all }j\right\},$
and the linear independence of $\left \{\Phi^{\alpha \to \beta}_{\gamma_j}:1\leq j\leq n\right\} $ gives us the conclusion we wanted.
\end{enumerate}
\end{proof}

\begin{rem}
The space ${\rm Cov}_{\g}(\alpha,\beta)$ was studied for the following cases:
\begin{itemize}
    \item $G$ is finite group and $\alpha=\beta$ \cite{MSD17}
    \item $G=SU(2)$ \cite[Corollary 4.6, Proposition 5.1]{AN14}
\end{itemize}
\end{rem}

\section{Trace duality and Clebsch-Gordan maps}

Recall that for a linear map $\Phi: B(H_A) \to B(H_B)$ we have $\Phi$ is CP $\Leftrightarrow$ $\Phi^*$ is CP $\Leftrightarrow$ $\Phi'$ is CP and $\Phi$ is TP $\Leftrightarrow$ $\Phi^*$ is unital $\Leftrightarrow$ $\Phi'$ is unital. Here, $\Phi^*$ and $\Phi'$ are the adjoint maps introduced in Section \ref{subsec-Choi-adjoint}.

We first observe that quantum group covariance transfers to the adjoint maps.


\begin{prop}\label{prop-cov-adjoint}
Let $\QG$ be a compact quantum group of Kac type, then a linear map $\Phi: B(H_\alpha) \to B(H_\beta)$, $\alpha,\beta \in {\rm Irr}(\QG)$, is $\QG_{(\alpha,\beta)}$-covariant if and only if  $\Phi^*$ is $\QG_{(\beta, \alpha)}$-covariant.
\end{prop}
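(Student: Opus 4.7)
The plan is to interpret $\QG_{(\alpha,\beta)}$-covariance as the condition that $\Phi$ is a $\QG$-intertwiner between the adjoint coactions on $B(H_\alpha)$ and $B(H_\beta)$, and then invoke the general fact that the Hilbert--Schmidt adjoint of an intertwiner between unitary $\QG$-modules is again an intertwiner.

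First I would observe that the adjoint coaction $\alpha_\alpha(X) = u^\alpha(X\otimes 1)(u^\alpha)^*$ turns $B(H_\alpha)$ into a $\QG$-comodule, and under the canonical identification $B(H_\alpha) \cong H_\alpha \otimes \bar H_\alpha$ (via $|e_i\rangle\langle e_j|\leftrightarrow e_i\otimes \bar e_j$) it corresponds to the tensor product representation $u^\alpha \tp \bar u^\alpha$. In the Kac case $\bar u^\alpha = (u^\alpha)^c$ is already unitary (no $Q_\alpha$-correction is needed), so $u^\alpha\tp \bar u^\alpha$ is a unitary representation, and the Hilbert--Schmidt inner product $\langle X,Y\rangle = {\rm Tr}(X^*Y)$ is the corresponding $\QG$-invariant inner product. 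In particular $B(H_\alpha)$ becomes a unitary $\QG$-module, represented by a unitary $u_\alpha^{\rm ad}\in B(B(H_\alpha))\otimes C(\QG)$.

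Next, the definition of $\QG_{(\alpha,\beta)}$-covariance is precisely the statement $(\Phi\otimes {\rm id})\, u_\alpha^{\rm ad} = u_\beta^{\rm ad}\,(\Phi\otimes {\rm id})$, i.e., $\Phi:B(H_\alpha)\to B(H_\beta)$ is a $\QG$-intertwiner of unitary modules. Taking adjoints in the finite-dimensional module setting and using unitarity of $u_\alpha^{\rm ad}$ and $u_\beta^{\rm ad}$, a routine manipulation yields $(\Phi^*\otimes {\rm id})\, u_\beta^{\rm ad} = u_\alpha^{\rm ad}\,(\Phi^*\otimes {\rm id})$, which is exactly the $\QG_{(\beta,\alpha)}$-covariance of $\Phi^*$. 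The converse follows from the involution $(\Phi^*)^* = \Phi$.

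The main obstacle I expect is verifying in detail that the HS inner product is invariant under the adjoint coaction in the quantum group (not merely classical group) setting: this requires the Kac hypothesis in two places, namely unitarity of $\bar u^\alpha$ so that $u^\alpha \tp \bar u^\alpha$ is genuinely a unitary representation, and the $*$-preservation property $S(a^*) = S(a)^*$ of the antipode so that passing to adjoints of elements of $B(H)\otimes C(\QG)$ behaves as expected. An alternative route more in the spirit of Section \ref{sec-CG} is computational: using the CJ-matrix characterization of Theorem \ref{thm-cov-Choi} together with the direct identity $C_{\Phi^*} = \sigma\circ (t\otimes t)(C_\Phi)$, where $\sigma$ denotes the tensor flip, and the key algebraic identity $(t\otimes t\otimes S)(\bar u^\alpha\tp u^\beta) = \sigma(\bar u^\beta\tp u^\alpha)$ in the Kac case, one can transport the $\QG_{(\bar\alpha,\beta)}$-invariance of $C_\Phi$ to the $\QG_{(\bar\beta,\alpha)}$-invariance of $C_{\Phi^*}$ by applying the $*$-anti-homomorphism $t\otimes t\otimes S$ followed by the $*$-homomorphism $\sigma\otimes {\rm id}$.
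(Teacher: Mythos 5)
Your proposal is correct in substance but proceeds along a genuinely different route from the paper. The paper's proof is a direct computation: it writes the covariance identity in matrix coefficients, pairs it against an arbitrary $Y\in B(H_\beta)$ via the bilinear trace duality defining $\Phi^*$, and then applies ${\rm id}\otimes S$, using the Kac hypothesis only through the single identity $S(u^*_{kl})=u_{lk}$. Your main argument is instead categorical: identify $B(H_\alpha)\cong H_\alpha\otimes \bar H_\alpha$ so that the adjoint coaction becomes $u^\alpha\tp (u^\alpha)^c$, note that in the Kac case this is a genuine unitary representation with the Hilbert--Schmidt inner product as invariant inner product, read covariance as membership in $\Hom(u^{\rm ad}_\alpha,u^{\rm ad}_\beta)$, and invoke the standard fact that adjoints of intertwiners between unitary representations are intertwiners (multiply the adjointed intertwining relation by the unitaries on both sides). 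This is cleaner and makes transparent where Kac enters (unitarity of $(u^\alpha)^c$ without the $Q_\alpha$-twist), whereas the paper's computation is more elementary and works directly with the definition of $\Phi^*$. Your alternative CJ-matrix route (via $C_{\Phi^*}=\sigma\circ(t\otimes t)(C_\Phi)$ and $(t\otimes t\otimes S)$ applied to $(u^\alpha)^c\tp u^\beta$) is closer in spirit to the paper, since both hinge on applying the antipode with the Kac relation; I checked the two identities you quote and they do hold.

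One small but genuine point you should patch in the main route: the adjoint produced by the unitary-module argument is the \emph{sesquilinear} Hilbert--Schmidt adjoint $\Phi^\dagger$, determined by $\Tr(\Phi^\dagger(Y)^*X)=\Tr(Y^*\Phi(X))$, whereas the paper's $\Phi^*$ is defined through the \emph{bilinear} pairing $\Tr(XY)$; these coincide only for $*$-preserving maps, and in general $\Phi^*=*\circ\Phi^\dagger\circ *$. The statement you want still follows, because covariance is stable under the conjugation $\Psi\mapsto *\circ\Psi\circ *$ (apply the $*$-operation of $B(H)\otimes C(\QG)$ to the covariance identity evaluated at $Y^*$), but this bridging observation must be stated; as written, your argument proves the proposition for $\Phi^\dagger$ rather than for the $\Phi^*$ of Section 2.1.
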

\begin{proof}
The $\QG_{(u,v)}$-covariance of $\Phi$ is equivalent to
    $$\sum^{n_u}_{i,j,k,l = 1}\Phi(e^u_{ij} X e^u_{lk}) \otimes u_{ij}u^*_{kl} = \sum^{n_v}_{p,q,r,s = 1}e^v_{pq}\Phi(X)e^v_{sr} \otimes v_{pq}v^*_{rs}$$
for any $X\in B(H_u)$. Applying the map ${\rm Tr}(\cdot~ Y) \otimes {\rm id}$ on both sides for any $Y\in B(H_v)$ we get
    $$\sum_{i,j,k,l}e^u_{lk}\Phi^*(Y)e^u_{ij} \otimes u_{ij}u^*_{kl} = \sum_{p,q,r,s}\Phi^*(e^v_{sr} Y e^v_{pq}) \otimes v_{pq}v^*_{rs}.$$
Now we apply ${\rm id} \otimes S$ on both sides to get the desired conclusion. Here, $S$ is the antipode map and we need the Kac type condition for $S(u^*_{kl}) = u_{lk}$. 
\end{proof}

Now we would like to focus on the adjoint maps of CG-maps, especially in the Kac type case.

\begin{lem}\label{lem-adjoint-maps}
Let $\alpha,\beta,\gamma \in {\rm Irr}(\g)$. The adjoint map $\left (\Phi^{\alpha \to \beta}_{\gamma}\right )':B(H_{\beta})\rightarrow B(H_{\alpha})$ is given by
\begin{align}
(\Phi^{\alpha \to \beta}_{\gamma})'(X)&= (v^{\beta,\gamma}_{\alpha})^t (X\otimes Q_{\gamma})\overline{(v^{\beta,\gamma}_{\alpha})}\label{eq-adjoint1}\\
&=\frac{d_{\alpha}}{d_{\gamma}}Q_{\alpha}^{\frac{1}{2}}({\rm id}\otimes {\rm Tr})(p^{\overline{\alpha},\beta}_{\overline{\gamma}}({\rm Id}_{\alpha}\otimes X^t))Q_{\alpha}^{\frac{1}{2}} \label{eq-adjoint2}
\end{align}
In particular, we have $\displaystyle (\Phi^{\alpha \to \beta}_\gamma)^*(Q_{\beta}) = (\Phi^{\alpha \to \beta}_\gamma)'(Q_{\beta}) =Q_{\alpha}$.
\end{lem}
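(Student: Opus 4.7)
The plan is to establish the two formulas separately, and then to read off the final identities about $Q_\beta$ from either of them combined with earlier results.

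For formula \eqref{eq-adjoint1}, I would unfold the defining property $\llangle (\Phi^{\alpha\to\beta}_\gamma)'(Y), A\rrangle = \llangle Y, \Phi^{\alpha\to\beta}_\gamma(A)\rrangle$ directly. Writing $v = v^{\beta,\gamma}_\alpha$ and using the definition of CG-maps, the right-hand side becomes ${\rm Tr}(Y^t (\,{\rm id}\otimes {\rm Tr}_{Q_\gamma})(vAv^*)) = {\rm Tr}((Y^t\otimes Q_\gamma)\,vAv^*)$, and cyclicity of the trace rewrites this as ${\rm Tr}(A\cdot v^*(Y^t\otimes Q_\gamma)v)$. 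Matching against ${\rm Tr}((\Phi'(Y))^t A)$ gives $(\Phi'(Y))^t = v^*(Y^t\otimes Q_\gamma) v$. Taking transpose on both sides and invoking $(v^*)^t = \overline{v}$, $(Y^t)^t = Y$, and $Q_\gamma^t = Q_\gamma$ (which holds because in the paper's chosen basis $Q_\gamma$ is diagonal), produces $\Phi'(Y) = v^t(Y\otimes Q_\gamma)\overline{v}$, i.e.\ \eqref{eq-adjoint1}.

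For formula \eqref{eq-adjoint2}, the strategy is to go through the CJ-matrix. A short matrix-element calculation starting from $\Phi(X)=({\rm Tr}\otimes{\rm id})(C_\Phi(X^t\otimes{\rm Id}))$ and the pairing $\llangle Y, \Phi(X)\rrangle$ yields the universal identity
\[
\Phi'(Y) \;=\; ({\rm id}\otimes {\rm Tr})\bigl(C_\Phi({\rm Id}\otimes Y^t)\bigr)
\]
for any linear map $\Phi$. Plugging in the explicit CJ-matrix from Theorem~\ref{thm-Choi}, namely $C_{\Phi^{\alpha\to\beta}_\gamma} = \frac{d_\alpha}{d_\gamma}(Q_\alpha^{1/2}\otimes {\rm Id}_\beta)\,p^{\overline{\alpha},\beta}_{\overline{\gamma}}\,(Q_\alpha^{1/2}\otimes {\rm Id}_\beta)$, and noting that the two $Q_\alpha^{1/2}$ factors commute with ${\rm id}\otimes {\rm Tr}$ because they act only on the first tensor leg, gives \eqref{eq-adjoint2} immediately. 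The main point to be careful about is the bookkeeping of transposes in this CJ-identity; the fact that $Q_\alpha$ and $Q_\beta$ are diagonal in the fixed basis is what lets the final transpose disappear and makes the stated form correct.

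The last statement is then essentially a corollary. Proposition~\ref{prop-CG-map-cov}(2) says ${\rm Tr}(Q_\beta\,\Phi^{\alpha\to\beta}_\gamma(X)) = {\rm Tr}(Q_\alpha X)$ for every $X$, which is precisely the defining relation $\la (\Phi^{\alpha\to\beta}_\gamma)^*(Q_\beta), X\ra = \la Q_\alpha, X\ra$; hence $(\Phi^{\alpha\to\beta}_\gamma)^*(Q_\beta) = Q_\alpha$. Using $\Phi'(\cdot) = \Phi^*(\cdot^t)^t$ and $Q_\beta^t=Q_\beta$, $Q_\alpha^t = Q_\alpha$, one also obtains $(\Phi^{\alpha\to\beta}_\gamma)'(Q_\beta) = Q_\alpha$. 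As an independent sanity check one can derive the same conclusion from \eqref{eq-adjoint1}: since $v$ is isometric and $(Q_\beta\otimes Q_\gamma)v = vQ_\alpha$ by \eqref{Q-matrix-v}, one has $v^*(Q_\beta\otimes Q_\gamma)v = v^*vQ_\alpha = Q_\alpha$, and transposing gives $v^t(Q_\beta\otimes Q_\gamma)\overline{v}=Q_\alpha$. The only subtle step in the whole proof is the transpose bookkeeping in paragraph two; everything else is direct bracket manipulation or a citation of earlier results.
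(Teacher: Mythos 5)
Your proposal is correct and follows essentially the same route as the paper: \eqref{eq-adjoint2} is obtained by feeding the explicit CJ-matrix of Theorem \ref{thm-Choi} into the trace-duality pairing, and the final identity $(\Phi^{\alpha\to\beta}_\gamma)'(Q_\beta)=Q_\alpha$ comes down to \eqref{Q-matrix-v} together with \eqref{eq-adjoint1}. Your direct derivation of \eqref{eq-adjoint1} from the definition of the CG-map (with the transpose bookkeeping and the diagonality of the $Q$-matrices made explicit) is a welcome spelling-out of a step the paper leaves implicit, but it does not change the substance of the argument.
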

\begin{proof}
Let $C$ be the CJ-matrix of $\Phi^{\alpha \to \beta}_{\gamma}$. For any $Y\in B(H_\beta)$ we have
    \begin{align*}
        \llangle (\Phi^{\alpha \to \beta}_{\gamma})'(X), Y \rrangle
        &= {\rm Tr}(X^t \Phi^{\alpha \to \beta}_{\gamma}(Y))\\
        &= {\rm Tr}(X^t ({\rm Tr}\otimes {\rm id})(C(Y^t\otimes {\rm Id}))\\
        &= ({\rm Tr} \otimes {\rm Tr})\left (\frac{d_\alpha}{d_\gamma} p^{\overline{\alpha},\beta}_{\overline{\gamma}} (Q^{\frac{1}{2}}_\alpha Y^t Q^{\frac{1}{2}}_\alpha \otimes X^t) \right )\text{  by }(\ref{eq-CJ})\\
        &= \frac{d_\alpha}{d_\gamma} ( {\rm Tr} \otimes {\rm Tr})((Y^t\otimes {\rm Id}) (Q^{\frac{1}{2}}_\alpha \otimes X^t)p^{\overline{\alpha},\beta}_{\overline{\gamma}} (Q^{\frac{1}{2}}_\alpha \otimes {\rm Id}) ),
    \end{align*}
which means that
    \begin{align*}
        (\Phi^{\alpha \to \beta}_{\gamma})'(X)
        &= \frac{d_\alpha}{d_\gamma} ({\rm id} \otimes {\rm Tr})( (Q^{\frac{1}{2}}_\alpha \otimes X^t)p^{\overline{\alpha},\beta}_{\overline{\gamma}} (Q^{\frac{1}{2}}_\alpha \otimes {\rm Id}) )\\
        &= \frac{d_\alpha}{d_\gamma} Q^{\frac{1}{2}}_\alpha \left[ ({\rm id} \otimes {\rm Tr})(p^{\overline{\alpha},\beta}_{\overline{\gamma}} ({\rm Id}\otimes X^t)) \right] Q^{\frac{1}{2}}_\alpha
    \end{align*}
The last statement is directly from \eqref{Q-matrix-v} and  \eqref{eq-adjoint1}.
\end{proof}

\begin{lem}\label{lem-proj-flip}
For any $\alpha,\beta,\gamma$ with $\gamma\subseteq \alpha\tp \beta$ we have
\begin{equation}
(Q_{\beta}^{\frac{1}{2}}\otimes Q_{\alpha}^{\frac{1}{2}})\circ \Sigma\circ \overline{v^{\alpha,\beta}_{\gamma}}Q_{\gamma}^{-\frac{1}{2}}
\end{equation}
is an intertwiner between $\overline{\gamma}$ and $\overline{\beta}\tp \overline{\alpha}$, where $\Sigma: H_{\alpha}\otimes H_{\beta}\rightarrow H_{\beta}\otimes H_{\alpha},\; \xi\otimes \eta\mapsto \eta\otimes \xi$ is the flip map.
\end{lem}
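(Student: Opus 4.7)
The plan is to reduce the claim to taking the entrywise $*$-conjugate of the defining intertwining relation
\[ (u^\alpha \tp u^\beta)(v^{\alpha,\beta}_\gamma \otimes 1) = (v^{\alpha,\beta}_\gamma \otimes 1) u^\gamma \]
for the isometric intertwiner $v := v^{\alpha,\beta}_\gamma$, and then converting contragredients to conjugates using the $Q$-matrix formula $u^c = (Q_u^{-1/2}\otimes 1)\,\overline{u}\,(Q_u^{1/2}\otimes 1)$. The key observation is that $*$-conjugation reverses products in $C(\QG)$, so when applied componentwise it sends the $(i,k),r$ entry $\sum_{i',k'} v_{i'k',r}\,u^\alpha_{ii'}u^\beta_{kk'}$ of the left-hand side to $\sum_{i',k'} \overline{v_{i'k',r}}\,(u^\beta_{kk'})^{*}(u^\alpha_{ii'})^{*}$; on the Hilbert-space side this order reversal is exactly absorbed by the flip $\Sigma$, and the scalar conjugates of the coefficients of $v$ are precisely the entries of $\overline{v}$.

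So the first step is a routine matrix-entry check verifying
\[ \bigl((u^\beta)^c \tp (u^\alpha)^c\bigr)(\Sigma\,\overline{v}\otimes 1) = (\Sigma\,\overline{v}\otimes 1)(u^\gamma)^c, \]
where one uses $(\Sigma\,\overline{v})_{(ki),r} = \overline{v_{ik,r}}$. The second step is to substitute $u^c = (Q_u^{-1/2}\otimes 1)\,\overline{u}\,(Q_u^{1/2}\otimes 1)$ for each of $\alpha,\beta,\gamma$ and rearrange: on the left, the inner $(Q_\beta^{-1/2}\otimes Q_\alpha^{-1/2})$ combines with the outer $(Q_\beta^{1/2}\otimes Q_\alpha^{1/2})$ placed in front of $\Sigma\,\overline{v}$, while on the right the $Q_\gamma^{1/2}$ and the $Q_\gamma^{-1/2}$ already sitting next to $\overline{v}$ leave behind exactly one factor of $Q_\gamma^{-1/2}$. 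The outcome is precisely
\[ (\overline{u^\beta}\tp \overline{u^\alpha})\bigl(\mathbf{X}\otimes 1\bigr) = (\mathbf{X}\otimes 1)\,\overline{u^\gamma}, \qquad \mathbf{X}=(Q_\beta^{1/2}\otimes Q_\alpha^{1/2})\circ\Sigma\circ\overline{v}\,Q_\gamma^{-1/2}, \]
which is the desired intertwining.

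The only delicate point is the interplay between the noncommutativity of $C(\QG)$ and the swap $\Sigma$: the flip appears in the statement precisely because $*$-conjugation swaps the tensor order of $\alpha$ and $\beta$, which is also why the target representation is $\overline{\beta}\tp\overline{\alpha}$ rather than $\overline{\alpha}\tp\overline{\beta}$. Everything else — the placement of the $Q^{1/2}$ factors and the cancellations when moving between $u^c$ and $\overline{u}$ — is straightforward bookkeeping, and in fact the positioning of the $Q$-matrices in the statement of the lemma is dictated by, and consistent with, the identity \eqref{Q-matrix-v} relating $v^{\alpha,\beta}_\gamma$ to the $Q$-matrices of $\alpha,\beta,\gamma$.
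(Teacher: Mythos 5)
Your argument is correct and is essentially the paper's own proof: the paper likewise takes the entrywise conjugate (transpose followed by the involution) of the intertwining relation $(u^\alpha\tp u^\beta)(v\otimes 1)=(v\otimes 1)u^\gamma$, applies $\Sigma\otimes\mathrm{id}$ to absorb the order reversal of the matrix coefficients, arriving at $((u^\beta)^c\tp(u^\alpha)^c)(\Sigma\overline{v}\otimes 1)=(\Sigma\overline{v}\otimes 1)(u^\gamma)^c$, and then passes from contragredients to conjugates via $u^c=(Q_u^{-1/2}\otimes 1)\overline{u}(Q_u^{1/2}\otimes 1)$. Your step 2 just makes explicit the $Q$-matrix bookkeeping that the paper leaves implicit in its final sentence.
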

\begin{proof}
Let us simply write $v^{\alpha,\beta}_{\gamma} = v$. We begin with the intertwining property $(v\otimes 1)u^{\gamma} = (u^{\alpha}\tp u^{\beta})(v\otimes 1)$, which can be written as
    \begin{align*}
        \sum^{n_\gamma}_{i,j=1}ve^\gamma_{ij} \otimes u^\gamma_{ij}
        &= \sum^{n_\alpha}_{p,q=1}\sum^{n_{\beta}}_{r,s=1} (e^\alpha_{pq}\otimes e^\beta_{rs})v \otimes u^\alpha_{pq}u^\beta_{rs} \in B(H_{\alpha}\otimes H_{\beta})\otimes C(\g).
    \end{align*}
Applying transpose on $B(H_{\alpha}\otimes H_{\beta})$ and then applying the involution map $*$ on $B(H_{\alpha}\otimes H_{\beta})\otimes C(\g)$ we get
\begin{align}\label{ineq1}
(\overline{v}\otimes 1)(u^{\gamma})^c &= \sum_{p,q,r,s}(e^{\alpha}_{pq}\otimes e^{\beta}_{rs})\overline{v}\otimes (u^{\beta}_{rs})^*(u^{\alpha}_{pq})^*.
\end{align}
Finally,
we apply $\Sigma \otimes {\rm id}$ to both sides of (\ref{ineq1}) to get
\begin{align*}
(\Sigma \circ \overline{v}\otimes 1)(u^{\gamma})^c 
&= \sum_{p,q,r,s}\Sigma \circ (e^{\alpha}_{pq}\otimes e^{\beta}_{rs})\overline{v}\otimes (u^{\beta}_{rs})^*(u^{\alpha}_{pq})^*\\
&= \sum_{p,q,r,s} (e^{\beta}_{rs}\otimes e^{\alpha}_{pq})\circ \Sigma \circ \overline{v}\otimes (u^{\beta}_{rs})^*(u^{\alpha}_{pq})^*\\
&=\left [\sum_{p,q,r,s} e^{\beta}_{rs}\otimes e^{\alpha}_{pq}\otimes (u^{\beta}_{rs})^*(u^{\alpha}_{pq})^*\right ]\cdot (\Sigma \circ  \overline{v}\otimes 1)\\
&=((u^{\beta})^c\tp (u^{\alpha})^c) (\Sigma \circ  \overline{v}\otimes 1),
\end{align*}
which leads us directly to the conclusion we wanted.
\end{proof}

Now the following theorem tells us that the category of $\g$-covariant maps is closed under taking the adjoint map if $\g$ is of Kac type. A special case where $G=SU(2)$ was studied in \cite[Corollary 6.3]{AN14}.

\begin{thm}\label{thm-Kac-ajoint}
Let $\g$ be of Kac type and suppose that $\alpha,\beta,\gamma\in {\rm Irr}(\g)$ such that $\overline{\gamma}\subseteq \overline{\alpha}\tp \beta$ in a multiplicity-free way. Then the adjoint map of $\Phi^{\alpha \to \beta}_\gamma$ is given by
\begin{equation}
(\Phi^{\alpha \to \beta}_\gamma)^*=\frac{d_{\alpha}}{d_{\beta}} \cdot \Phi^{\beta \to \alpha}_{\overline{\gamma}}.
\end{equation}
\end{thm}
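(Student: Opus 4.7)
The plan is to identify both sides of the claimed equality by computing their Choi--Jamio{\l}kowski matrices and matching them, using Theorem~\ref{thm-Choi} as the main computational input together with Lemma~\ref{lem-proj-flip} to relate the two resulting projections.

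First I would compute the CJ-matrix of the right-hand side. By Theorem~\ref{thm-Choi} in the Kac setting (so all $Q$-factors reduce to the identity), and using $d_{\overline{\gamma}}=d_{\gamma}$ together with $\overline{\overline{\gamma}}\cong \gamma$,
\[
C_{\frac{d_{\alpha}}{d_{\beta}}\Phi^{\beta\to\alpha}_{\overline{\gamma}}}
= \frac{d_{\alpha}}{d_{\beta}}\cdot \frac{d_{\beta}}{d_{\overline{\gamma}}}\, p^{\overline{\beta},\alpha}_{\overline{\overline{\gamma}}}
= \frac{d_{\alpha}}{d_{\gamma}}\, p^{\overline{\beta},\alpha}_{\gamma}.
\]

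Next I would compute $C_{(\Phi^{\alpha\to\beta}_{\gamma})^{*}}$. A short coordinate calculation from the definitions in Subsection~\ref{subsec-Choi-adjoint} establishes the general identity
\[
C_{\Phi^{*}} = (\Sigma\, C_{\Phi}\, \Sigma^{-1})^{t},
\]
where $\Sigma:H_A\otimes H_B\to H_B\otimes H_A$ is the coordinate flip, which is orthogonal so that $\Sigma^{t}=\Sigma^{-1}$. Combining this with the Kac-type case of Theorem~\ref{thm-Choi}, namely $C_{\Phi^{\alpha\to\beta}_{\gamma}}=\frac{d_{\alpha}}{d_{\gamma}}\, p^{\overline{\alpha},\beta}_{\overline{\gamma}}$, I obtain
\[
C_{(\Phi^{\alpha\to\beta}_{\gamma})^{*}}
= \frac{d_{\alpha}}{d_{\gamma}}\, \Sigma\, (p^{\overline{\alpha},\beta}_{\overline{\gamma}})^{t}\, \Sigma^{-1}.
\]

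The remaining task is to establish $\Sigma\, (p^{\overline{\alpha},\beta}_{\overline{\gamma}})^{t}\, \Sigma^{-1} = p^{\overline{\beta},\alpha}_{\gamma}$. For this I would invoke Lemma~\ref{lem-proj-flip} with parameters $(\overline{\alpha},\beta,\overline{\gamma})$; since $\g$ is of Kac type the $Q$-dressings vanish, and the lemma provides an isometric intertwiner
\[
\Sigma\,\overline{v^{\overline{\alpha},\beta}_{\overline{\gamma}}}\colon H_{\gamma}\longrightarrow H_{\overline{\beta}}\otimes H_{\alpha}
\]
between $\gamma$ and $\overline{\beta}\tp\alpha$. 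The multiplicity-free hypothesis $\overline{\gamma}\subseteq \overline{\alpha}\tp\beta$ transfers through the chain of equivalences recalled in the preliminaries (equivalently, by Frobenius reciprocity for compact quantum groups) to multiplicity one for $\gamma\subseteq \overline{\beta}\tp\alpha$, so the associated intertwiner space is one-dimensional. Hence this new isometry differs from $v^{\overline{\beta},\alpha}_{\gamma}$ only by a unimodular scalar, and the two range projections coincide:
\[
p^{\overline{\beta},\alpha}_{\gamma} = \Sigma\,\overline{p^{\overline{\alpha},\beta}_{\overline{\gamma}}}\,\Sigma^{-1} = \Sigma\, (p^{\overline{\alpha},\beta}_{\overline{\gamma}})^{t}\, \Sigma^{-1},
\]
where the last equality uses that for a self-adjoint operator the conjugate coincides with the transpose. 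Comparing the three displayed CJ-matrices and invoking the injectivity of the CJ-map yields the claimed equality of maps.

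The step I expect to need the most care is the penultimate one: verifying that the multiplicity-free assumption transfers to the dual inclusion $\gamma\subseteq \overline{\beta}\tp\alpha$, and that the composition of conjugation, transposition and flip introduces no spurious sign or phase. The unimodular scalar ambiguity in the identification of the two isometric intertwiners is harmless because only the associated range projection enters the CJ-matrix formula.
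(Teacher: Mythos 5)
Your proposal is correct and follows essentially the same route as the paper: both sides are identified through their CJ-data via Theorem \ref{thm-Choi}, and the projection $p^{\overline{\alpha},\beta}_{\overline{\gamma}}$ is converted into $p^{\overline{\beta},\alpha}_{\gamma}$ by the conjugate-flip Lemma \ref{lem-proj-flip} together with the Kac-type simplifications ($Q=\mathrm{Id}$, $\overline{\overline{\gamma}}=\gamma$, multiplicity transfer by Frobenius reciprocity). The only difference is organizational: the paper passes through the explicit adjoint formula of Lemma \ref{lem-adjoint-maps} and the inversion formula \eqref{eq-CJ}, whereas you derive and use the flip--transpose identity $C_{\Phi^{*}}=(\Sigma\, C_{\Phi}\,\Sigma^{-1})^{t}$ and conclude by injectivity of the CJ-map, which is the same computation in different bookkeeping.
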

\begin{proof}
For any $X\in B(H_\alpha)$ we have
\begin{align*}
(\Phi^{\alpha \to \beta}_\gamma)'(X)&=\frac{d_{\alpha}}{d_{\gamma}}({\rm id}\otimes {\rm Tr})(p^{\overline{\alpha},\beta}_{\overline{\gamma}}({\rm Id}_{\alpha}\otimes X^t))\\
&=\frac{d_{\alpha}}{d_{\gamma}}({\rm Tr}\otimes {\rm id})(\sigma(p^{\overline{\alpha},\beta}_{\overline{\gamma}})(X^t\otimes {\rm Id}_{\alpha})),
\end{align*}
where $\sigma:B(H_{\alpha}\otimes H_{\beta})\rightarrow B(H_{\alpha}\otimes H_{\beta})$ is the flip map given by $\sigma(a\otimes b)=\Sigma \circ (a\otimes b)\circ \Sigma = b\otimes a$ for the flip map $\Sigma$ on $H_\alpha\otimes H_\beta$. Lemma \ref{lem-proj-flip}, together with the Kac type condition, tells us that $(p^{\overline{\beta},\alpha}_{\gamma})^t = \overline{p^{\overline{\beta},\alpha}_{\gamma}} = \sigma(p^{\overline{\alpha},\beta}_{\overline{\gamma}})$, so that we have
\[ (\Phi^{\alpha \to \beta}_\gamma)^*(X)=\frac{d_{\alpha}}{d_{\gamma}}({\rm Tr}\otimes {\rm id})( p^{\overline{\beta},\alpha}_{\gamma}(X^t\otimes {\rm Id}_{\alpha})).\]
This is exactly the CJ-matrix of the linear map $\displaystyle \frac{d_{\alpha}}{d_{\beta}} \cdot \Phi^{\beta \to \alpha}_{\overline{\gamma}}$ by Theorem \ref{thm-Choi}, so that we get the desired conclusion.
\end{proof}

\begin{rem}

Proposition \ref{prop-cov-adjoint} and Theorem \ref{thm-Kac-ajoint} tell us that, if $\g$ is of Kac type and $\alpha,\beta\in {\rm Irr}(\g)$, then ${\rm CPTPCov}_{\g}(\alpha,\beta)$ and ${\rm UCPCov}_{\g}(\beta,\alpha)$ are equivalent in the sense that there is a linear bijection $\Psi: B(H_\alpha, H_\beta) \to B(H_\beta, H_\alpha),\; \Phi \mapsto \frac{d_{\alpha}}{d_{\beta}}\Phi^*$ such that $\Psi({\rm CPTPCov}_{\g}(\alpha,\beta)) = {\rm UCPCov}_{\g}(\beta,\alpha)$.

\end{rem}

\section{Examples}

In this section we demonstrate that Theorems \ref{thm-Choi} and \ref{thm-classification} serve as a method of determining all $\QG_{(\alpha,\beta)}$-covariant maps, consisting of two steps, through several concrete examples. The first step is to find all components $\overline{\gamma}$ in the irreducible decomposition of $\overline{\alpha}\tp \beta$ to get the associated orthogonal projections $p^{\overline{\alpha},\beta}_{\overline{\gamma}}$, which traces back to the extreme points $\Phi^{\alpha\rightarrow \beta}_{\gamma}$ due to Theorem \ref{thm-Choi}. This procedure is quite handy, and was already used in \cite{K02,H18} for some special cases without mentioning why the idea should work. We know by now that multiplicity-free assumption guarantees the success of the procedure thanks to Theorem \ref{thm-Choi}.

What Theorem \ref{thm-Choi} is also telling us is the details of the Stinespring procedure for the resulting channel. More precisely, we need to find an isometric intertwiner $v^{\beta,\gamma}_{\alpha}\in \text{Hom}(\alpha,\beta\tp \gamma)$, which leads us to the corresponding CG-channels $\Phi^{\alpha\rightarrow \beta}_{\gamma}=({\rm id}\otimes {\rm Tr}_{Q_\gamma})(v^{\beta,\gamma}_{\alpha}\cdot (v^{\beta,\gamma}_{\alpha})^*)$. This is the second step for determining all $\QG_{(\alpha,\beta)}$-covariant maps.

The symmetry (quantum) groups that we are going to focus on are the quantum permutation group $S_n^+$ and the $q$-deformed quantum group $SU_q(2)$, which unearth certain quantum phenomena. Then, we will consider abelian groups through projective representations allowing us to revisit the Weyl-covariant channels and explain how to obtain an analogous covariance with respect to Majorana operator in the fermionic system.

\color{black}


\subsection{The permutation groups versus the quantum permutation groups}\label{sec-free-general}

In this section we would like to compare covariances with respect to $S_n$ and $S^+_n$ for $n \ge 4$. Note that $S_2 = S^+_2 = \z_2$, the abelian group with two points and $S_3 = S^+_3$, the simplest non-abelian group. This means that $n\ge 4$ is a natural restriction for observing genuine quantum phenomena.

Let us recall basics of representation theory of $S_n$. We know that ${\rm Irr}(S_n)$ can be identified with the set of all partitions of $n$:
    $$\{\lambda = (\lambda_1, \cdots, \lambda_k): \lambda_1\ge \cdots \ge \lambda_k,\; \lambda_j \in \n,\; \sum^k_{j=1}\lambda_j = n\}.$$
Note that the fundamental representation $\pi: S_n \to B(\ell^2_n)$ given by $\pi(\tau)(e_k) = e_{\tau(k)}$, $1\le k\le n$, has the irreducible decomposition $u^{(n)}\oplus u^{(n-1,1)}$. Indeed $\pi$ has two invariant subspaces $H_0=\Comp\cdot \xi_0$ with $\xi_0=\displaystyle \sum_{j=1}^n e_j$ and $W=H_0^{\perp}$, which correspond to the irreducible representations $u^{(n)}$ and $u^{(n-1,1)}$ respectively. Here $u^{(n)}$ is the trivial representation, and we write $u^{(n-1,1)}$ simply by $V$. Immediate consequences are ${\rm dim}(V) = n-1$ and $V$ is self-conjugate, i.e. $\overline{V} = V$.



From \cite[p.384, Example]{Tok84} we have a multiplicity-free decomposition
    \begin{equation}\label{eq-perm-decomp}
        (n-1,1) \tp (n-1,1) \cong (n) \oplus (n-1,1) \oplus (n-2,2) \oplus (n-2,1,1)
    \end{equation}
with ${\rm dim}(n-2,2) = \frac{n(n-3)}{2}$ and ${\rm dim}(n-2,1,1) = \frac{(n-1)(n-2)}{2}$ from the dimension formula \cite[(4.11)]{FulHar91}. 
Thus, we get realizations of the representations $(n-2,2)$ and $(n-2,1,1)$ once we find invariant subspaces of $(n-1,1) \tp (n-1,1)$ with the corresponding dimensions.

Now we turn our attention to the case of $S^+_n$, whose representation category has the same fusion rule as $SO(3)$. Recall that ${\rm Irr}(S_n^+)$ is identified with $\left \{u^0,u^1,u^2,\cdots\right\}$. Let us write $u^1$ simply by $U$. Then $U\tp U$ has the following irreducible decomposition
    $$U \tp U \cong u^2 \oplus u^1 \oplus 1$$
and the dimension of $u^2$ is $n^2-3n+1$. Note that the permutation group $S_n$ can be understood as a (closed) quantum subgroup of $S^+_n$, i.e. there is a surjective unital $*$-homomorphism $R: C(S^+_n) \to C(S_n)$ such that $\Delta_{S_N} R = (R\otimes R)\Delta_{S_n^+}$. Here, the maps $\Delta_{S_N}$ and $\Delta_{S_n^+}$ are co-multiplications on $S_n$ and $S_n^+$, respectively. Actually the map $R$ is determined by $R(U_{ij})=V_{ij}$ for all $1\leq i,j\leq N-1$. Moreover, the map ${\rm id} \otimes R$ sends an arbitrary unitary representation on $S^+_n$ into a unitary representation on $S_n$. In particular, we have 
$({\rm id} \otimes R)(U^2) = (n-2,2) \oplus (n-2,1,1)$ since
\begin{align*}
    1\oplus V \oplus ({\rm id}\otimes R)(u^2)&= ({\rm id}\otimes {\rm id}\otimes R)(U\tp U)\\
    & =V\tp V\\
    & =1\oplus V \oplus (n-2,2)\oplus (n-2,1,1).
\end{align*} 
Moreover, applying ${\rm id} \otimes R$ to \eqref{eq-bipartite-inv} we can see that $(S^+_n)_{(U,U)}$-invariant states are automatically $(S_n)_{(V,V)}$-invariant. 


Combining all the above observations we can conclude that for $n\ge 4$ the convex set of all $(S_n)_{V,V}$-covariant channels has exactly four extreme points $\Phi_j$, $1\le j\le 4$ with $\Phi_1 = id$, $\Phi_2 = \Phi^{V\to V}_{V}$, $\Phi_3 = \Phi^{V\to V}_{(n-2,2)}$, $\Phi_4 = \Phi^{V\to V}_{(n-2,1,1)}$. Meanwhile, the convex set of all $(S^+_n)_{U,U}$-covariant channels has exactly three extreme points, $\Phi_1$, $\Phi_2$ and $\frac{n(n-3)}{2(n^2-3n+1)}\Phi_3 + \frac{(n-1)(n-2)}{2(n^2-3n+1)} \Phi_4$. Indeed, if we start with $(S_n^+)_{U,U}$-invariant states, they are spanned by the orthogonal projections $p^{U,U}_1=p^{V,V}_1$, $p^{U,U}_U=p^{V,V}_V$ and $p^{U,U}_{u^2}=p^{V,V}_{(n-2,2)}+p^{V,V}_{(n-2,1,1)}$. Then, by Theorem \ref{thm-Choi}, they trace back to $\Phi_1$, $\Phi_2$ and $\frac{n(n-3)}{2(n^2-3n+1)}\Phi_3 + \frac{(n-1)(n-2)}{2(n^2-3n+1)} \Phi_4$ respectively. The geometric picture of this $2$-simplex inside the tetrahedron with the vertices $\Phi_1,\cdots,\Phi_4$ is given as follows. This finishes the first step for both of the (quantum) groups. 

\color{black}
\begin{flushright}
\begin{tikzpicture}[line join = round, line cap = round]
\pgfmathsetmacro{\factor}{1};
\coordinate [label=left:$\Phi_1$] (A) at (0,0,0*\factor);
\coordinate [label=below:$\Phi_2$] (B) at (5.1962,0,3*\factor);
\coordinate [label=right:$\Phi_3$] (C) at (5.1962,0,-3*\factor);
\coordinate [label=above:$\Phi_4$] (D) at (3.4641,2.4495,0*\factor);
\coordinate [label=above right :$\frac{n(n-3)}{2(n^2-3n+1)}\Phi_3+\frac{(n-1)(n-2)}{2(n^2-3n+1)}\Phi_4$] (E) at (4.5034,0.9798,-1.8*\factor);

\foreach \i in {A,B,C,D}
    \draw[dashed] (0,0)--(\i);
\draw[-, fill=gray!30, opacity=.3] (A)--(D)--(B)--cycle;
\draw[-, fill=gray!30, opacity=.3] (A)--(D)--(C)--cycle;
\draw[-, fill=gray!30, opacity=.3] (B)--(D)--(C)--cycle;
\draw[-, fill=gray!30, opacity=.7] (A)--(B)--(E)--cycle;
\end{tikzpicture}
\end{flushright}


\subsubsection{Revisiting the case of $S_4$ and $S^+_4$}\label{sec-free-special}

Let us continue to the second step in the case of $n=4$, namely, figuring out the Stinespring representations of the associated CG-channels.

Recall the fundamental representation $\pi: S_4 \to B(\ell^2_4)$, $\pi(\tau)(e_k) = e_{\tau(k)}$, $1\le k\le 4$. We choose a new ONB $\{f_k: 0\le k\le 3\}$ given by     
    $$\begin{cases}
f_0 = \frac{1}{2}(e_1+e_2+e_3+e_4),\\
f_1 = \frac{1}{2}(e_1-e_2+e_3-e_4),\\
f_2 = \frac{1}{2}(e_1-e_2-e_3+e_4),\\
f_3 = \frac{1}{2}(e_1+e_2-e_3-e_4).
\end{cases}$$
Then, $f_0$ is an $\pi$-invariant vector and the subspace $W = \{f_0\}^\perp = {\rm span}\{f_k:1\le k \le 3\}$ clearly has an ONB $\{f_k:1\le k \le 3\}$, which will be our choice of basis below.

Now we would like to find realizations of the representations $(2,2)$ and $(2,1,1)$ through the irreducible decomposition
    $$V\tp V \cong (4) \oplus (3,1) \oplus (2,2) \oplus (2,1,1)$$
for the representation $V = (3,1) = \pi|_W$. First, we record the matrix form of $V$ (with respect to the basis $\{f_k = |k\ra :1\le k \le 3\})$ in the case of generators (transpositions) $(12), (23), (34)$ of $S_4$, which is
    {\small \begin{equation}\label{eq-V-matrix}
    V(12) = \begin{bmatrix}0&-1&0\\ -1&0&0\\ 0&0&1\end{bmatrix},\; V(23) = \begin{bmatrix}0&0&1\\ 0&1&0\\ 1&0&0\end{bmatrix},\; V(34) = \begin{bmatrix}0&1&0\\ 1&0&0\\ 0&0&1\end{bmatrix}.
    \end{equation}
    }
For the orthonormal vectors $\begin{cases}
|\psi\ra = \frac{1}{\sqrt{3}}(|11\ra + |22\ra + |33\ra),\\
|g^\pm_1\ra = \frac{1}{\sqrt{2}}(|23\ra \pm |32\ra),\\
|g^\pm_2\ra = \frac{1}{\sqrt{2}}(|31\ra \pm |13\ra),\\
|g^\pm_3\ra = \frac{1}{\sqrt{2}}(|12\ra \pm |21\ra)
\end{cases}$ and the unit vectors $\begin{cases}
|h_1\ra = \frac{1}{\sqrt{6}}(-2|11\ra +|22\ra + |33\ra),\\
|h_2\ra = \frac{1}{\sqrt{6}}(|11\ra -2|22\ra + |33\ra),\\
|h_3\ra = \frac{1}{\sqrt{6}}(|11\ra +|22\ra -2 |33\ra),
\end{cases}$
we can readily check that 4 irreducible $V$-invariant subspaces $\begin{cases}
W_1 = \Comp |\psi\ra,\\
W_2 = {\rm span}\{g^+_1,g^+_2,g^+_3\},\\
W_3 = {\rm span}\{h_1,h_2,h_3\},\\
W_4 = {\rm span}\{g^-_1,g^-_2,g^-_3\}
\end{cases}$ correspond to the components $(4), (3,1), (2,2)$ and $(2,1,1)$. Note that ${\rm dim}(W_3) = 2$ and we have chosen non-orthogonal vectors for later use.

In order to figure out the CG-channels coming from the associated orthogonal projections $p^{V,V}_{(4)},p^{V,V}_{(3,1)},p^{V,V}_{(2,2)},p^{V,V}_{(2,1,1)}$ we need to look at the relations $V \subseteq V \tp 1$, $V \subseteq V \tp V$, $V \subseteq V \tp (2,2)$ and $V \subseteq V \tp (2,1,1)$ and find the corresponding intertwining isometries $v_1, \cdots, v_4$.


The first case $v_1: \Comp^3 \to \Comp^3 \otimes \Comp \cong \Comp^3$ is nothing but the identity map, i.e. $v_1 = id_{\Comp^3}$, which means that $\Phi^{V\rightarrow V}_{(4)}={\rm id}_3$.

Secondly, we set $v_2: \Comp^3 \to \Comp^3\otimes W_2\subseteq \Comp^3\otimes \Comp^3 \otimes \Comp^3,\;\; |k\ra \mapsto |g^+_k\ra$. Then, it is straightforward to check that $v_2$ is the wanted intertwiner, i.e. $(V(\tau)\otimes V(\tau))v_2 = v_2 \circ V(\tau)$ for each $\tau \in S_4$ using the matrix form \eqref{eq-V-matrix}. Note that it actually is enough to consider the cases $\tau = (12), (23), (34)$. Then, the resulting channel is {\small
$$\Phi^{V\rightarrow V}_{(3,1)}:\left [ \begin{array}{ccc} a_{11}&a_{12}&a_{13}\\a_{21}&a_{22}&a_{23}\\a_{31}&a_{32}&a_{33} \end{array} \right ]\mapsto \frac{1}{2}\left [ \begin{array}{ccc} a_{22}+a_{33}&a_{21}&a_{31}\\a_{12}&a_{11}+a_{33}&a_{32}\\a_{13}&a_{23}&a_{11}+a_{22} \end{array} \right ].$$
}

For $v_3$ and $v_4$ the situation is a bit more complicated since our understanding of the representations $(2,2)$ and $(2,1,1)$ is an indirect one coming from the decomposition \eqref{eq-perm-decomp} or, in other words, as restrictions of $V\tp V$. We first consider $v_3: \Comp^3 \to \Comp^3 \otimes W_3\subseteq \Comp^3 \otimes \Comp^3 \otimes \Comp^3$ given by $v_3(|k\ra) := |k\ra \otimes |h_k\ra$. Then, we can readily check that $v_3$ is the intertwiner we were looking for, i.e. $(V(\tau)\otimes V(\tau) \otimes V(\tau))v_3 = v_3 \circ V(\tau)$ for each $\tau \in S_4$. Now we define $v_4: \Comp^3 \to \Comp^3 \otimes W_4\subseteq \Comp^3 \otimes \Comp^3 \otimes \Comp^3$ by
$\begin{cases}
v_4(|1\ra) := \frac{1}{\sqrt{2}}(|2 g^-_3\ra - |3 g^-_2\ra),\\
v_4(|2\ra) := \frac{1}{\sqrt{2}}(|3 g^-_1\ra - |1 g^-_3\ra),\\
v_4(|3\ra) := \frac{1}{\sqrt{2}}(|1 g^-_2\ra - |2 g^-_1\ra).
\end{cases}$
We can similarly check that $v_4$ is the intertwiner for $V\subseteq V \tp (2,1,1)$, i.e. $(V(\tau)\otimes V(\tau) \otimes V(\tau))v_4 = v_4 \circ V(\tau)$ for each $\tau \in S_4$. Finally, we get the resulting channels:
{\small
$$\Phi^{V\rightarrow V}_{(2,2)}:\left [ \begin{array}{ccc} a_{11}&a_{12}&a_{13}\\a_{21}&a_{22}&a_{23}\\a_{31}&a_{32}&a_{33} \end{array} \right ]\mapsto \frac{1}{2}\left [ \begin{array}{ccc} 2a_{11}&-a_{12}&-a_{13}\\-a_{21}&2a_{22}&-a_{23}\\-a_{31}&-a_{32}&2a_{33} \end{array} \right ],$$
$$\Phi^{V\rightarrow V}_{(2,1,1)}:\left [ \begin{array}{ccc} a_{11}&a_{12}&a_{13}\\a_{21}&a_{22}&a_{23}\\a_{31}&a_{32}&a_{33} \end{array} \right ]\mapsto \frac{1}{2}\left [ \begin{array}{ccc} a_{22}+a_{33}&-a_{21}&-a_{31}\\-a_{12}&a_{11}+a_{33}&-a_{32}\\-a_{13}&-a_{23}&a_{11}+a_{22} \end{array} \right ].$$
}


Now we move to the case of quantum permutation group $S_4^+$. From the discussion in the beginning of Section \ref{sec-free-general} we know that $U\tp U$ has only three invariant subspaces which correspond to the following orthogonal projections/quantum channels
$$
\begin{array}{lllll}
    p^{1,1}_0&=p^{V,V}_{(n)}&\leftrightarrow & \Phi^{1\rightarrow 1}_0&=\Phi^{V\rightarrow V}_{(4)}={\rm id}_3\\
    p^{1,1}_1&=p^{V,V}_{(3,1)}&\leftrightarrow &\Phi^{1\rightarrow 1}_1&=\Phi^{V\rightarrow V}_{(3,1)} \\
    p^{1,1}_2&=p^{V,V}_{(2,2)}+p^{V,V}_{(2,1,1)}&\leftrightarrow & \Phi^{1\rightarrow 1}_2&=\frac{2}{5}\Phi^{V\rightarrow V}_{(2,2)}+\frac{3}{5}\Phi^{V\rightarrow V}_{(2,1,1)}
\end{array}.
$$

We only need to determine the Stinespring representation of the channel $\Phi^{1 \to 1}_2$. Let $w$ be the isometric intertwiner for $u^1 \subseteq  u^1\tp u^2$, then by applying ${\rm id}\otimes R$ we can see that $w$ is also an intertwiner for $V\subseteq (V \tp(2,2)) \oplus (V \tp (2,1,1))$. Since the multiplicity of $V$ in $(V \tp (2,2)) \oplus (V \tp (2,1,1))$ is 2, we know that $w = \alpha v_3 + \beta v_4:\Comp^3\rightarrow \Comp^3\otimes (W_3 + W_4)$ for some $\alpha, \beta \in \Comp$. 
Comparing with the identity $\Phi^{1\rightarrow 1}_2=\frac{2}{5}\Phi^{V\rightarrow V}_{(2,2)}+\frac{3}{5}\Phi^{V\rightarrow V}_{(2,1,1)}$ we actually get $w = \sqrt{\frac{2}{5}}v_3 + \sqrt{\frac{3}{5}} v_4$.

\begin{rem}
We have seen the difference between $(S_N)_{(V,V)}$-covariance and $(S_n^+)_{(U,U)}$-covariance by counting the number of extreme points for each case. We can apply the same idea for other free quantum groups, namely free orthogonal quantum groups $O_N^+$ and free unitary quantum groups $U_N^+$ containing $O_N$ and $U_N$ as closed quantum subgroups, respectively. Let us denote the fundamental representations of $O_N$, $O_N^+$, $U_N$, $U_N^+$ by $v$, $V$, $u$ and $U$ respectively. Here, the structures of $(v,v)$-, $(u,u)$-, $(u,\overline{u})$-covariant quantum channels have been studied in \cite{VW01,K02,H18}, and the number of their extreme points are $3,2,2$ respectively. In the quantum group perspective, their analogous notions should be $(V,V)$-, $(U,U)$, $(U,\overline{U})$-covariant quantum channels, and the number of their extreme points are $2,2,1$ respectively. 

\end{rem}

\subsection{The case of $q$-deformed quantum group $SU_q(2)$}\label{sec-SUq(2)}

\subsubsection{Qubit channels with $SU_q(2)$-covariance}\label{sec-SUq(2)-special}
The representation theory for $SU_q(2)$ is parallel with the one of $SU(2)$, so that we have ${{\rm Irr}(SU_q(2))} = \{0,1,2,\cdots,\}$, where $0$ corresponds to the trivial representation. See \cite{KK89,KS97} for details about the Clebsch-Gordan coefficients of $SU_q(2)$. Let us determine the set of all $SU_q(2)_{(1,1)}$-covariant channels, which are qubit channels. From the decomposition $\overline{1}\tp 1 \cong 2 \oplus 0$ and the fact that $\overline{k} \cong k$ we know that we need to focus on $1\subseteq 1\tp 0$, $1\subseteq 1\tp 2$ and the associated isometric intertwiners $v^{1,0}_1$, $v^{1,2}_1$, which are given by $v^{1,0}_1 = id_{\Comp^2}$ and
\begin{align*}
    v^{1,2}_1|0\ra&= \frac{1}{\sqrt{1+q^2+q^4}}|01\ra-\frac{\sqrt{q^2+q^4}}{\sqrt{1+q^2+q^4}}|10\ra\\
        v^{1,2}_1|1\ra&= \frac{\sqrt{1+q^2}}{\sqrt{1+q^2+q^4}}|02\ra-\frac{q^2}{\sqrt{1+q^2+q^4}}|11\ra.
\end{align*}
Then, the resulting CG-maps are $\Phi^{1\rightarrow 1}_0={\rm id}_2$ and
    $$\Phi^{1\rightarrow 1}_2\left ( \left [ \begin{array}{cc}a&b\\c&d\end{array} \right ] \right )= \frac{1}{1+q^2+q^4}\left [ \begin{array}{cc} a+d(q^2+q^4)&-bq^2\\-cq^2&a(1+q^2)+dq^4\end{array} \right ].$$
Thus, we get
    $${\rm UCPCov}_{SU_q(2)}(1,1)=\left \{ p\cdot \Phi^{1\rightarrow 1}_0 +(1-p)\cdot \Phi^{1\rightarrow 1}_2: 0\leq p\leq 1\right\},$$
the convex set of all $SU_q(2)_{(1,1)}$-covariant quantum channels in the Heisenberg picture.    

\begin{rem}
In the Schr{\" o}dinger picture we have a completely different conclusion. Indeed, a $SU_q(2)_{(1,1)}$-covariant linear map $a\cdot \Phi^{1\rightarrow 1}_0 +b\cdot \Phi^{1\rightarrow 1}_2$ is trace-preserving if and only if $a=1$ and $b=0$. Thus, the identity map ${\rm id}_2=\Phi^{1\rightarrow 1}_0$ is the only trace-preserving $SU_q(2)_{(1,1)}$-covariant map. 

Surprisingly, the same conclusion still holds for $SU_q(2)_{(k,k)}$ with arbitrary $k\in \n$. In other words, the identity map ${\rm id}_{k+1}=\Phi^{k\rightarrow k}_0$ is the unique $SU_q(2)_{(k,k)}$-covariant CPTP map. Moreover, we can prove that there exists no $SU_q(2)_{(k,l)}$-covariant CPTP map if $k>l$. See Corollary \ref{cor-SUq(2)} for the details.
\end{rem}


\begin{rem}\label{rmk-quantum-trace}
The use of quantum trace for the definition of CG-maps is essential to get $\QG$-covariance. Indeed, if we use the usual trace together with the isometry $v^{1,2}_1$ as in \cite{BCLY20}, then we have
\begin{align*}
    &({\rm id}\otimes {\rm Tr})\left ( v^{1,2}_1 \left [\begin{array}{cc} a&b\\c&d \end{array}\right ](v^{1,2}_1)^* \right )\\
    &=\frac{1}{1+q^2+q^4}\left [\begin{array}{cc} a+(1+q^2)d&-q^2b\\-q^2c&a(q^2+q^4)+dq^4 \end{array}\right ],
\end{align*}
which is trace-preserving. However, it is not a linear combination of $\Phi^{1\rightarrow 1}_0$ and $\Phi^{1\rightarrow 1}_2$.
\end{rem}

\begin{rem}
Note that, for a compact group $G$, the complementary channels of $G$-Clebsch-Gordan channels 
$$({\rm Tr}\otimes {\rm id})(v^{\beta,\gamma}_{\alpha}\cdot (v^{\beta,\gamma}_{\alpha})^*)$$
are always $G_{(\alpha,\gamma)}$-covariant. This is not automatically reproduced in the quantum group case. Indeed, a natural choice of ``complementary map" of $\Phi^{1\rightarrow 2}_1$ would be
$$\rho\mapsto ({\rm Tr}_{Q_2}\otimes {\rm id})(v^{2,1}_1\rho (v^{2,1}_1)^*),$$ 
which is quantum trace-preserving, but is not $SU_q(2)_{(1,1)}$-covariant. More precisely, the above map is
$$\left [ \begin{array}{cc}a&b\\c&d\end{array} \right ]\mapsto  \frac{1}{1+q^2+q^4}\left [ \begin{array}{cc} aq^4+d(q^4+q^6)&-bq^2\\-cq^2&a(q^{-2}+1)+d\end{array} \right ],$$
which is not a linear combination of $\Phi^{1\rightarrow 1}_0$ and $\Phi^{1\rightarrow 1}_2$.
\end{rem}

\subsection{$SU_q(2)$-covariant CP maps are rarely TP}\label{sec-SUq(2)-existence}

We begin with a result applicable for a general quantum group $\QG$.

\begin{prop}\label{prop-general-main}
Suppose that $\alpha,\beta\in {\rm Irr}(\g)$ and $\displaystyle \overline{\alpha}\tp \beta$ has a multiplicity-free irreducible decomposition.
\begin{enumerate}
\item There exists a CPTP $\QG_{(\alpha, \beta)}$-covariant map only if 
\[ \frac{n_{\alpha}\norm{Q_{\alpha}}}{d_{\alpha}} \leq \frac{n_{\beta}\norm{Q_{\beta}}}{d_{\beta}}.\]
\item Any CPTP $\QG_{(\alpha, \alpha)}$-covariant map is a convex combination of $\Phi^{\alpha\to \alpha}_\gamma$ for which $\overline{\gamma}\subseteq \overline{\alpha}\tp\alpha$ and $Q_\gamma= I_\gamma$.
\end{enumerate}
\end{prop}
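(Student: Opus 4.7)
The plan starts by invoking Theorem~\ref{thm-classification}(2) to write any CP $\g_{(\alpha,\beta)}$-covariant $\Phi$ as $\Phi = \sum_j a_j \Phi^{\alpha\to\beta}_{\gamma_j}$ with $a_j\geq 0$, and to combine this with Proposition~\ref{prop-CG-map-cov}(3)---which gives $\Phi({\rm Id}_\alpha) = (\sum_j a_j)(d_\alpha/d_\beta){\rm Id}_\beta$---so that the trace-preserving condition fixes $\sum_j a_j = n_\alpha d_\beta/(n_\beta d_\alpha)$. For part~(1), the key is Lemma~\ref{lem-adjoint-maps}, which yields $(\Phi^{\alpha\to\beta}_{\gamma_j})^*(Q_\beta) = Q_\alpha$ and hence $\Phi^*(Q_\beta) = (\sum_j a_j)Q_\alpha$. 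Since $\Phi^*$ is CP (so positive) and TP rewrites as $\Phi^*({\rm Id}_\beta) = {\rm Id}_\alpha$, applying $\Phi^*$ to the nonnegative operator $\|Q_\beta\|{\rm Id}_\beta - Q_\beta$ gives $\|Q_\beta\|{\rm Id}_\alpha \geq (\sum_j a_j) Q_\alpha$, so $(\sum_j a_j)\|Q_\alpha\|\leq \|Q_\beta\|$; substituting the value of $\sum_j a_j$ rearranges to the desired inequality.

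For part~(2) with $\alpha=\beta$, the constraint becomes $\sum_j a_j = 1$. I would introduce $M_j := (\Phi^{\alpha\to\alpha}_{\gamma_j})^*({\rm Id}_\alpha) = (v^{\alpha,\gamma_j}_\alpha)^*({\rm Id}_\alpha\otimes Q_{\gamma_j}) v^{\alpha,\gamma_j}_\alpha$, so that TP reads $\sum_j a_j M_j = {\rm Id}_\alpha$. Because $\Phi^{\alpha\to\alpha}_{\gamma_j}({\rm Id}_\alpha) = {\rm Id}_\alpha$ one has ${\rm Tr}(M_j) = n_\alpha$ for every $j$, so each $M_j/n_\alpha$ is a state and the maximally mixed state ${\rm Id}_\alpha/n_\alpha$ is written as their convex combination; strict concavity of the von Neumann entropy then forces $M_j = {\rm Id}_\alpha$ whenever $a_j > 0$. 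The task thus reduces to the implication $M_j = {\rm Id}_\alpha \Rightarrow Q_{\gamma_j} = {\rm Id}_{\gamma_j}$, which is the main obstacle since $v^*Av=0$ with $A$ indefinite does not on its own make $A$ vanish on the range of $v$.

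To clear this hurdle I would sandwich $M_j - {\rm Id}_\alpha = 0$ by $v^{\alpha,\gamma_j}_\alpha$ and its adjoint, producing $W_j({\rm Id}_\alpha\otimes(Q_{\gamma_j}-{\rm Id}_{\gamma_j}))W_j = 0$ with $W_j := p^{\alpha,\gamma_j}_\alpha$, and then use that $W_j$ commutes with $u^\alpha\tp u^{\gamma_j}$ to conjugate by this representation and apply ${\rm id}\otimes h$. Schur orthogonality evaluates $({\rm id}\otimes h)[u^{\gamma_j}((Q_{\gamma_j}-{\rm Id}_{\gamma_j})\otimes 1)(u^{\gamma_j})^*]$ to the scalar $({\rm Tr}(Q_{\gamma_j}^2)/d_{\gamma_j}-1){\rm Id}_{\gamma_j}$, so the averaged identity collapses to $({\rm Tr}(Q_{\gamma_j}^2)/d_{\gamma_j}-1)W_j = 0$. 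As $W_j\neq 0$ one concludes ${\rm Tr}(Q_{\gamma_j}^2) = {\rm Tr}(Q_{\gamma_j}) = d_{\gamma_j}$, whence Cauchy--Schwarz gives $d_{\gamma_j}^2 = ({\rm Tr}\, Q_{\gamma_j})^2 \leq n_{\gamma_j}\,{\rm Tr}(Q_{\gamma_j}^2) = n_{\gamma_j}d_{\gamma_j}$, so $d_{\gamma_j}\leq n_{\gamma_j}$; combined with the standard $d_{\gamma_j}\geq n_{\gamma_j}$ this forces $d_{\gamma_j} = n_{\gamma_j}$ and hence $Q_{\gamma_j} = {\rm Id}_{\gamma_j}$.
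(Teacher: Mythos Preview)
Your argument for part~(1) is correct and essentially identical to the paper's: both express $\Phi=\sum_j a_j\Phi^{\alpha\to\beta}_{\gamma_j}$, use Lemma~\ref{lem-adjoint-maps} to get $\Phi^*(Q_\beta)=(\sum_j a_j)Q_\alpha$, and combine the unitality of $\Phi^*$ with the value $\sum_j a_j=n_\alpha d_\beta/(n_\beta d_\alpha)$ coming from TP. Whether one phrases the bound via ``$\Phi^*$ UCP $\Rightarrow$ contractive'' or via ``$\Phi^*$ positive and unital applied to $\|Q_\beta\|{\rm Id}_\beta-Q_\beta\ge 0$'' is a matter of taste.

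Part~(2), however, has a genuine gap. Your key step claims that writing the maximally mixed state ${\rm Id}_\alpha/n_\alpha$ as a convex combination $\sum_j a_j(M_j/n_\alpha)$ of states forces, by strict concavity of the von Neumann entropy, every $M_j$ with $a_j>0$ to equal ${\rm Id}_\alpha$. This is false: for instance ${\rm Id}_2/2=\tfrac12|0\rangle\langle0|+\tfrac12|1\rangle\langle1|$ exhibits the maximally mixed state as a convex combination of pure states. Strict concavity only yields $S({\rm Id}_\alpha/n_\alpha)\ge\sum_j a_j S(M_j/n_\alpha)$, with strict inequality unless all the $M_j$ with $a_j>0$ coincide; it does \emph{not} force that common value to be ${\rm Id}_\alpha$. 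Since the remainder of your argument (the averaging with $u^\alpha\tp u^{\gamma_j}$ and Cauchy--Schwarz) is built on the conclusion $M_j={\rm Id}_\alpha$, the whole chain collapses.

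The paper avoids this obstacle by taking a trace one level higher. From \eqref{eq-adjoint2} the condition $\Phi'({\rm Id}_\alpha)={\rm Id}_\alpha$ becomes
\[
\sum_j \frac{a_j}{d_{\gamma_j}}\,({\rm id}\otimes{\rm Tr})\bigl(p^{\overline{\alpha},\alpha}_{\overline{\gamma_j}}\bigr)=\frac{1}{d_\alpha}Q_\alpha^{-1},
\]
and applying ${\rm Tr}$ to both sides gives the scalar identity $\sum_j a_j\,n_{\gamma_j}/d_{\gamma_j}=1$. Since $\sum_j a_j=1$ and $n_{\gamma_j}/d_{\gamma_j}\le 1$ always, one concludes immediately that $a_j=0$ whenever $n_{\gamma_j}/d_{\gamma_j}<1$, i.e.\ whenever $Q_{\gamma_j}\neq{\rm Id}_{\gamma_j}$. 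This is both shorter and avoids the need for any operator-level identification of the $M_j$.
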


\begin{proof}
\begin{enumerate}
\item Let $\Phi:B(H_{\alpha})\rightarrow B(H_{\beta})$ be a CPTP $\QG_{(\alpha, \beta)}$-covariant map and let $\overline{\alpha}\tp \beta\cong \oplus_{j=1}^n \overline{\gamma_j}$ is the multiplicity-free irreducible decomposition. Then we can write $\displaystyle \Phi = \sum_{j=1}^n a_j \Phi^{\alpha \to \beta}_{\gamma_j}$, and we have $\Phi^*(Q_{\beta})=\displaystyle \left ( \sum_{j=1}^n a_j \right )Q_{\alpha}$ by Lemma \ref{lem-adjoint-maps}. Since $\Phi^*$ is a UCP map, it is contractive, so that we have
    $$\frac{\norm{Q_{\beta}}}{\norm{Q_{\alpha}}} \geq \sum_{j=1}^n a_j.$$
Moreover, (3) of Proposition \ref{prop-CG-map-cov} tells us that $\Phi(I_\alpha) = (\sum_j a_j)\frac{d_\alpha}{d_\beta}I_\beta$, so that we have $$\displaystyle \sum_{j=1}^n a_j= \frac{n_{\alpha}d_{\beta}}{d_{\alpha}n_{\beta}}$$ by trace preserving property of $\Phi$. Thus the desired conclusion directly follows.

\item Note that the sequence $(a_j)_{j=1}^n$ is now indeed a probability distribution. Moreover, \eqref{eq-adjoint2} tells us that $\displaystyle \sum_{j=1}^n \frac{a_j}{d_{\gamma_j}}({\rm id}\otimes {\rm Tr})(p^{\overline{\alpha},\beta}_{\overline{\gamma_j}})=\frac{1}{d_{\alpha}}Q_{\alpha}^{-1}$, so that we get $\displaystyle \sum_{j=1}^n a_j \frac{n_{\gamma_j}}{d_{\gamma_j}}=1$ by taking trace on both sides. Now we can easily conclude that $a_j=0$ whenever $\displaystyle \frac{n_{\gamma_j}}{d_{\gamma_j}}<1$.
\end{enumerate}
\end{proof}

The above theorem covers a broad class of non-Kac compact quantum groups. For example, we can demonstrate that $SU_q(2)$-covariant CP maps are rarely trace-preserving as follows.

\begin{cor}\label{cor-SUq(2)}
Let $\g=SU_q(2)$ with $\displaystyle 0<q<1$. Recall that ${\rm Irr}(SU_q(2))$ can be identified with $\left \{0,1,2,\cdots\right\}$.
\begin{enumerate}
\item For any $k,l\in {\rm Irr}(SU_q(2))$ with $k>l$, there is no CPTP $SU_q(2)_{(k,l)}$-covariant map.
\item The only CPTP $SU_q(2)_{(k,k)}$-covariant map is the identity map on $M_{k+1}$.
\end{enumerate}
\end{cor}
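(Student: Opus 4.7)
The plan is to specialize Proposition \ref{prop-general-main} to $SU_q(2)$ using the explicit quantum-dimension data. Recall that for $u^k \in {\rm Irr}(SU_q(2))$ one has classical dimension $n_k = k+1$, and the intrinsic $Q$-matrix is diagonal with eigenvalues $q^{k}, q^{k-2}, \ldots, q^{-k}$. Consequently, for $0 < q < 1$,
\[
\norm{Q_k} = q^{-k}, \qquad d_k = q^{-k} + q^{-(k-2)} + \cdots + q^{k} = q^{-k}(1 + q^{2} + \cdots + q^{2k}).
\]
Moreover the fusion rule $\overline{k} \tp l \cong k \tp l \cong (k+l)\oplus (k+l-2)\oplus\cdots\oplus|k-l|$ is multiplicity-free, so Proposition \ref{prop-general-main} applies to every pair $(k,l)$.

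For part (1), the necessary condition from Proposition \ref{prop-general-main}(1) is $\frac{n_k \norm{Q_k}}{d_k} \leq \frac{n_l \norm{Q_l}}{d_l}$. Plugging in the values above, the ratio reduces to $f(m) := \frac{m+1}{1 + q^{2} + \cdots + q^{2m}}$, so it suffices to show $f$ is strictly increasing on $\{0,1,2,\ldots\}$ when $0<q<1$. Indeed, clearing denominators, $f(k+1) > f(k)$ is equivalent to $(k+1)q^{2(k+1)} < \sum_{j=0}^{k} q^{2j}$, which follows at once from $q^{2(k+1)} < q^{2j}$ for each $0 \leq j \leq k$. Hence $k > l$ forces $f(k) > f(l)$, contradicting the necessary condition and ruling out any CPTP $SU_q(2)_{(k,l)}$-covariant map.

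For part (2), Proposition \ref{prop-general-main}(2) says that any CPTP $SU_q(2)_{(k,k)}$-covariant map is a convex combination of those $\Phi^{k \to k}_{\gamma}$ with $\overline{\gamma} \subseteq \overline{k} \tp k$ and $Q_\gamma = I_\gamma$. Since the eigenvalues of $Q_\gamma$ include $q^{-\gamma}$, and $q^{-\gamma} = 1$ only when $\gamma = 0$, the second condition forces $\gamma = 0$. As $0 \subseteq \overline{k}\tp k$, the only surviving CG-map is $\Phi^{k \to k}_{0}$, which is easily seen to be $\mathrm{id}_{M_{k+1}}$: the isometric intertwiner $v^{k,0}_{k}$ is just $\mathrm{id}_{H_k}$ (under the canonical identification $H_k \otimes \mathbb{C} \cong H_k$) and ${\rm Tr}_{Q_0}$ is the identity scalar, so $\Phi^{k \to k}_{0}(X) = X$.

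The only non-routine step is the monotonicity estimate in part (1), but it reduces to the elementary inequality above. I do not foresee genuine obstacles, as both parts are essentially direct corollaries of Proposition \ref{prop-general-main} once the quantum-dimension data of $SU_q(2)$ is substituted.
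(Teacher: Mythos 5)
Your proof is correct and follows essentially the same route as the paper: both parts are direct specializations of Proposition \ref{prop-general-main} using the $SU_q(2)$ data $n_k=k+1$, $\norm{Q_k}=q^{-k}$, $d_k=q^{-k}(1+q^2+\cdots+q^{2k})$, with part (2) reduced to $Q_\gamma\neq{\rm Id}$ for $\gamma\neq 0$. The only (harmless) difference is that you verify the strict monotonicity of $m\mapsto (m+1)/(1+q^2+\cdots+q^{2m})$ by the elementary discrete inequality $(k+1)q^{2(k+1)}<\sum_{j=0}^{k}q^{2j}$, whereas the paper differentiates $f(x)=(x+1)/(1-q^{2(x+1)})$ and uses a Taylor-expansion bound.
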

\begin{proof}
\begin{enumerate}
\item By (1) of Proposition \ref{prop-general-main} it is enough to show that the following function $k\mapsto \displaystyle \frac{n_k \norm{Q_k}}{d_k}=\displaystyle \frac{(1-q^2)\cdot (k+1)}{1-q^{2(k+1)}}$ is strictly increasing. Indeed, for $f(x)=\displaystyle \frac{x+1}{1-q^{2(x+1)}}$, we have $\displaystyle\frac{f'(x)}{f(x)}=\frac{1}{x+1}+\frac{q^{2(x+1)}\cdot \log(q^2)}{1-q^{2(x+1)}}$
and $f'(x)>0$ is equivalent to the condition
\begin{equation}\label{eq-cond1}
    1-q^{2(x+1)}+(x+1)q^{2(x+1)}\log(q^2) > 0.
\end{equation}
Now the Taylor expansion $q^{-2(x+1)}=\displaystyle \sum_{k=0}^{\infty}\frac{(\log(q^{-2}))^k}{k!}(x+1)^k$ shows that $q^{-2(x+1)} > 1+(x+1)\log(q^{-2}),$
which is the same as \eqref{eq-cond1}.

\item Since $\overline{u^{k}}\tp u^k \cong u^0\oplus u^2\oplus \cdots \oplus u^{2k}$, the desired conclusion comes from the fact that $Q_{n}\neq {\rm Id}$ for all $n\neq 0$.
\end{enumerate}
\end{proof}

\begin{rem}
The case for $k<l$ is still open, and there are some possibilities to find CPTP $SU_q(2)_{(k,l)}$-covariant maps. Indeed, the following map
\[\lambda\mapsto \frac{\lambda}{d+1}{\rm Id}_{d+1}\]
is clearly a CPTP $SU_q(2)_{(0,d)}$-covariant channel for any $d\in \n$.
\end{rem}

\subsection{Covariance with respect to projective representations}\label{sec-proj}

From the beginning, the study of group symmetry allowed not only the representations of groups, but also {\em projective representations} coming from {\em 2-cocycle twistings}. For a compact group $G$ we say that a continuous function $\sigma: G\to \tor$ is {\em 2-cocycle} if $\sigma(s,t)\sigma(st,u)=\sigma(s,tu)\sigma(t,u)$ and $\sigma(s,e) = \sigma(e,t)=1$ for all $s,t,u\in G$. A unitary {\em projective representation} of $G$ with respect to $\sigma$ (simply, $\sigma$-representation) is a map (continuous under the strong operator topology) $\pi: G \to \mc U(H_\pi)$ satisfying $\pi(s)\pi(t) = \sigma(s,t)\pi(st)$. This gives us a natural action of $G$ to the states via conjugation, and consequently we get $G$-invarince of states and $G$-covariance of channels with respect to projective representations as in the ordinary representation case.

In a sense this new set of tools is not a big surprise since the theory of projective representation is closely related to ordinary representation theory. More precisely, the above projective representation $\pi$ can be lifted to a central extension $G_\sigma := \tor\times G$ of $G$ with the group law $(s,x)\cdot (t,y):=(st\sigma(x,y), xy)$. The actual lifting is the map $\tilde{\pi}: G_\sigma \to \mc U(H_\pi),\; (s,x) \mapsto s\pi(x)$, which becomes an ordinary unitary representation of the compact group $G_\sigma$. Note that $\tilde{\pi}$ is known to be irreducible if and only if $\pi$ is irreducible. One can easily see that the action of $G$ through $\pi$ and of $G_\sigma$ through $\tilde{\pi}$ is exactly the same.

However, it is still valuable to separate the case of projective representations due to their intimate connection to fundamental quantum systems such as (finite) Weyl systems and fermionic systems. This case will provide us further non-trivial examples of multiplicity-free irreducible decompositions, and lead us back to the well-known concept of {\em Weyl covariant channels} as we will cover below.


From now on we would like to narrow our attention to the case that $G$ is a finite abelian group, where we can actually find nontrivial examples, and we will use additive notation for the group law (i.e. $x+y$, instead of $xy$) of $G$. The group $G$ is equipped with a 2-cocycle $\sigma$, and we will
require that the map $G \times G \to \tor,\; (x,y) \to \sigma(x,y)\overline{\sigma(y,x)}$ gives rise to a group isomorphism between $G$ and the dual group $\widehat{G}$ of $G$ consisting of characters on $G$.
All the above assumptions on $G$ and $\sigma$ guarantee that there exists only one (upto unitary equivalence) irreducible $\sigma$-representation $W: G \to B(H_W)$ \cite{DigernesVaradarajan04}.

The limited supply of irreducible $\sigma$-representations of $G$ forces us to focus on $G_{(W,W)}$-covariant linear map
    $$\Phi: B(H_W) \to B(H_W),$$
which means that $\Phi(W(x)AW(x)^*) = W(x)\Phi(A)W(x)^*$, $x\in G$, $A \in B(H_W)$.    
Note that it is easy to check that the adjoint map $\Phi^*$ of a $G_{(W,W)}$-covariant map $\Phi$ is again $G_{(W,W)}$-covariant, which means that we could still stay in the Schr{\" o}dinger picture in this case. In order to determine the structure of all $G_{(W,W)}$-covariant linear maps we can use exactly the same argument as before as long as the representation $\overline{W}\tp W$ has a multiplicity-free irreducible decomposition. One notable difference here is that $\overline{W}\tp W$ is an ordinary representation of $G$ while $W$ is a projective one. Moreover, it is well known that irreducible unitary representations of $G$ are nothing but the characters on $G$, namely the elements of $\widehat{G}$. Thus, we need to check that $\overline{W}\tp W \cong \bigoplus^N_{j=1}\gamma_j$, where all the elements $\gamma_j \in \widehat{G}$, $1\le j \le N$ are distinct. This actually happens in the following examples.

\subsubsection{Finite Weyl systems}
Let $G = F\times \widehat{F}$, where $F$ is another finite abelian group, equipped with the 2-cocyle $\sigma((x,\gamma), (y,\delta)) := \gamma(y)$. We recall the translation and the modulation operator $T_x$ and $M_\gamma$ for $x\in F$, $\gamma \in \widehat{G}$ given by
    $$T_xf(u) : =f(u-x),\;\;  M_\gamma f(u):=\gamma(u)f(u),\;\; f\in L^2(F),u\in F.$$
Then, the unique $\sigma$-representation $W:G \to B(H_W)$ is given by $H_W=L^2(F)$ and
    $$W(\xv) := T_x M_\gamma,\;\; \xv=(x,\gamma) \in G.$$
Our choice of the Haar measure on $F$ is the counting measure, so that the family
$\{\delta_x: x\in F\}$ is an orthonormal basis of $H_W=L^2(F)$, where $\delta_x$ is the Dirac delta function at $x\in F$. 
Now we consider a family $\{f_\xv\}_{\xv = (x,\gamma)\in G}$ of orthonormal basis for $H_W\otimes H_W = L^2(F\times F)$ given by
    \begin{equation}\label{eq-new-basis}
    f_\xv = |F|^{-\frac{1}{2}}\sum_{y\in F}\delta_y \otimes W(\xv)\delta_y.    
    \end{equation}
For $\yv = (x',\gamma')\in G$ we have
    \begin{align*}
        [\overline{W}(\yv) \otimes W(\yv)]f_\xv
        & = |F|^{-\frac{1}{2}}\sum_{y\in F}\overline{W(\yv)\delta_y} \otimes W(\yv)W(\xv)\delta_y\\
        & = \overline{\sigma(\xv,\yv)}\sigma(\yv,\xv)|F|^{-\frac{1}{2}}\sum_{y\in F}\overline{W(\yv)\delta_y} \otimes W(\xv)W(\yv)\delta_y\\
        & = \overline{\sigma(\xv,\yv)}\sigma(\yv,\xv)|F|^{-\frac{1}{2}}\sum_{y\in F}\delta_{x'+y} \otimes W(\xv)\delta_{x'+y}\\
        & = \overline{\sigma(\xv,\yv)}\sigma(\yv,\xv)f_\xv,
    \end{align*}
where we are using the fact that $W(\yv)\delta_y = \gamma'(y)\delta_{x'+y}$. This means that the representation $\overline{W}\tp W$ has a multiplicity free irreducible decomposition
    $$\overline{W}\tp W = \sum_{\xv \in G}\varphi_\xv |f_\xv\ra \la f_\xv|,$$
where $\varphi_\xv$ is a character on $G$ given by $\varphi_\xv(\yv) = \overline{\sigma(\xv,\yv)}\sigma(\yv,\xv)$, $\yv \in G$. In other words, we have
    $$\overline{W}\tp W \cong \bigoplus_{\xv \in G}\varphi_\xv.$$
Finally, we can easily check that the unitary channel ${\rm Ad}_{W(\xv)}$, $\xv\in G$ corresponds to the minimal projection $|f_\xv\ra \la f_\xv|$, i.e. $C_{{\rm Ad}_{W(\xv)}} = |F| \cdot |f_\xv\ra \la f_\xv|$.

Now we obtain the following result, which was known for the case of $F=\z_d$, $d\in N$, a.k.a. Weyl covariant channels \cite[Theorem 4]{SC18}. Note that the corresponding invariant states was already investigated in \cite[Example 6]{VW01}.

\begin{thm}
We have $\displaystyle {\rm Ext}({\rm CPTPCov}_G(W,W)) = \{{\rm Ad}_{W(\xv)}: \xv \in G\}$. In other words, any $G_{(W,W)}$-covariant channel $\Phi$ is of the form
    $$\Phi(A) = \sum_{(x,\gamma)\in G}p_\xv W(\xv) A W(\xv)^*,\;\; A\in B(H_W)$$
for some probability distribution $(p_\xv)_{\xv\in G}$.
\end{thm}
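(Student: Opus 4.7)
The plan is to reduce the theorem to Theorem~\ref{thm-classification}(4). Since $G$ is a finite abelian group, both $G$ and its central extension $G_\sigma$ are of Kac type, so the classification theorem applies as soon as the relevant tensor decomposition is multiplicity-free. The paragraphs preceding the theorem have already verified the multiplicity-free decomposition
\[ \overline{W}\tp W \cong \bigoplus_{\xv \in G}\varphi_\xv, \]
with associated rank-one projections $|f_\xv\ra\la f_\xv|$ coming from the orthonormal basis \eqref{eq-new-basis}.

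First I would invoke Theorem~\ref{thm-classification}(4) to obtain
\[ \mathrm{Ext}(\mathrm{CPTPCov}_G(W,W)) = \{\Phi^{W\to W}_{\overline{\varphi_\xv}} : \xv \in G\}, \]
so the task reduces to identifying each CG-map $\Phi^{W\to W}_{\overline{\varphi_\xv}}$ with the unitary conjugation $\mathrm{Ad}_{W(\xv)}$. Since the CJ-map is injective, it suffices to compare CJ-matrices. By Theorem~\ref{thm-Choi}, using $Q_W = \mathrm{Id}$, $d_W = |F|$ and $d_{\overline{\varphi_\xv}} = 1$, the CJ-matrix of $\Phi^{W\to W}_{\overline{\varphi_\xv}}$ equals $|F|\cdot p^{\overline W, W}_{\varphi_\xv} = |F|\cdot |f_\xv\ra\la f_\xv|$. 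A direct expansion using \eqref{eq-new-basis} then gives
\[ |F|\cdot |f_\xv\ra\la f_\xv| = \sum_{y,y' \in F} e_{yy'}\otimes W(\xv)e_{yy'}W(\xv)^* = C_{\mathrm{Ad}_{W(\xv)}}, \]
which is precisely the identification already remarked on right before the theorem statement.

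Once the extreme points have been matched, the second assertion, namely that every $G_{(W,W)}$-covariant channel can be written as $\Phi(A) = \sum_\xv p_\xv W(\xv) A W(\xv)^*$, will follow because the compact convex set $\mathrm{CPTPCov}_G(W,W)$ is the convex hull of its finitely many extreme points; alternatively, the coefficients can simply be read off from part (2) of Theorem~\ref{thm-classification} combined with the trace-preservation constraint $\sum_\xv p_\xv = 1$.

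The only conceptual point to guard against is that $W$ is a projective, not ordinary, representation of $G$, so that the Sections~3--4 machinery does not literally apply to $G$ itself. This will be resolved by passing to the lifted ordinary representation $\widetilde W$ of the compact group $G_\sigma$ introduced earlier in Section~\ref{sec-proj}: the $\sigma$ phase factors cancel in the conjugation action, so $G_{(W,W)}$-covariance coincides with $(G_\sigma)_{(\widetilde W, \widetilde W)}$-covariance and the Kac-type framework applies verbatim. Beyond this bit of bookkeeping I do not anticipate any serious obstacle.
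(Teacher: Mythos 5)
Your proposal is correct and follows essentially the same route as the paper: verify the multiplicity-free decomposition $\overline{W}\tp W \cong \bigoplus_{\xv}\varphi_\xv$ (via the lift to the central extension, so the Kac-type machinery of Theorems \ref{thm-cov-Choi}, \ref{thm-Choi} and \ref{thm-classification} applies), and then identify the extreme points with $\mathrm{Ad}_{W(\xv)}$ by matching CJ-matrices, $C_{\mathrm{Ad}_{W(\xv)}} = |F|\,|f_\xv\ra\la f_\xv| = |F|\, p^{\overline{W},W}_{\varphi_\xv}$. Your explicit handling of the projective-versus-ordinary representation issue via $G_\sigma$ just spells out what the paper leaves implicit at the start of Section \ref{sec-proj}.
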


\subsubsection{Fermionic system}
The fermionic system in $n$-modes can be described by the group $G = F\times \widehat{F}$, where $F = \z^n_2$. The difference from the finite Weyl system on the same group $G$ is that we use the following 2-cocycle.
    $$\sigma_{\text{fer}}(\xv, \yv) := (-1)^{\xv^t \Delta \yv},\; \xv,\yv \in G \cong \z^{2n}_2, \;\text{where} \;\Delta = {\Tiny \begin{bmatrix} 0 &&&\\ 1& 0&&\\1&1&0&\\ \vdots& \vdots& \ddots & \ddots & \\1 & 1 & \cdots & 1 & 0 \end{bmatrix}}.$$
The unique irreducible unitary $\sigma_{\text{fer}}$-representation  $W = W_{\text{fer}}:G \cong \z^{2n}_2 \to B(\ell^2(F))= M_{2^n}(\Comp)$ is given by $$W_{\text{fer}}(\xv) := \ch^{x_1}_1 \cdots \ch^{x_{2n}}_{2n},\;\; \xv = (x_1,\cdots,x_{2n}) \in \z^{2n}_2.$$
Here, we use the {\em Majorana operators} $\ch_1 , \dots, \ch_{2n}$, which are self-adjoint operators acting on $\ell^2(F)$ satisfying the CAR: 
    $$\{\ch_j, \ch_k\} = 2\delta_{jk},\; 1\le j,k\le 2n.$$
Note that $\ch_j$'s are identified with
\begin{align*}
    \ch_{2j-1}=Z\otimes\cdots\otimes Z\otimes X\otimes I\otimes \cdots \otimes I\\
    \ch_{2j}=Z\otimes\cdots\otimes Z\otimes Y\otimes I\otimes \cdots \otimes I,
\end{align*}
where
    $$X={\footnotesize \begin{bmatrix} 0&1\\1&0\end{bmatrix}},\,Y={\footnotesize \begin{bmatrix} 0&-i\\i&0\end{bmatrix}},\,Z={\footnotesize\begin{bmatrix} 1&0\\0&-1\end{bmatrix}},$$
the usual Pauli matrices for qubit, and we have $X$ and $Y$ at $j$-th tensor component in the above.

Now we would like to consider $(\z^{2n}_2)_{(W_{\text{fer}}, W_{\text{fer}})}$-covariant channels, which we should call {\em fermionic covariant} $2n$-qubit channels. We may apply the same argument as in the finite Weyl system case. More precisely, we consider an ONB $\{f_\xv\}_{\xv\in G}$ as in \eqref{eq-new-basis}. A careful look at the action of $W(\xv)$ on the vector $\delta_y$, $y\in F$, says that there is a permutation $\tau: F\to F$ and a constant $D_{\xv,y} \in \{\pm\}$ such that $W(\xv)\delta_y = D_{\xv,y}\delta_{\tau(y)}$. Indeed, for $\xv = (x_1,x_2,\cdots,x_{2n-1},x_{2n})$ and $y=(y_1,\cdots,y_n)$ we have
\begin{align*}
\lefteqn{W(\xv)\delta_y}\\
& = \hat{c}^{x_1}_1 \hat{c}^{x_2}_2\cdots \hat{c}^{x_{2n-1}}_{2n-1} \hat{c}^{x_{2n}}_{2n} \delta_y\\
& = (X^{x_1}Y^{x_2}\otimes I \otimes I \otimes \cdots)(Z^{x_3+x_4}\otimes X^{x_3}Y^{x_4} \otimes I \otimes \cdots)\\
& \;\;\;\; \cdots (\delta_{y_1}\otimes \cdots \otimes \delta_{y_n})\\
& = (-1)^{y_1(x_3 + \cdots + x_{2n})}X^{x_1}Y^{x_2} \delta_{y_1} \otimes (-1)^{y_2(x_5 + \cdots + x_{2n})}X^{x_3}Y^{x_4} \delta_{y_2} \otimes \cdots,
\end{align*}
which explains the above claim.

Now we can repeat the same argument after \eqref{eq-new-basis} to get the following.
\begin{thm}
We have $\displaystyle {\rm Ext}({\rm CPTPCov}_{\z^{2n}_2}(W_{\text{fer}}, W_{\text{fer}})) = \{{\rm Ad}_{W_{\text{fer}}(\xv)}: \xv \in G\}$. In other words, any fermionic covariant channel $\Phi$ is of the form
    $$\Phi(A) = \sum_{(x,\gamma)\in G}p_\xv W_{\text{fer}}(\xv) A W_{\text{fer}}(\xv)^*,\;\; A\in B(H_W)$$
for some probability distribution $(p_\xv)_{\xv\in G}$.
\end{thm}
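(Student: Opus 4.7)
The plan is to mirror the proof of the finite Weyl system theorem verbatim, exploiting the preceding paragraph's computation that $W_{\text{fer}}(\xv)\delta_y = D_{\xv,y}\delta_{\tau_\xv(y)}$ for a permutation $\tau_\xv$ of $F$ and a sign $D_{\xv,y}\in\{\pm 1\}$. First I would verify that the vectors $f_\xv = |F|^{-\frac{1}{2}}\sum_{y\in F}\delta_y\otimes W_{\text{fer}}(\xv)\delta_y$ form an orthonormal basis of $H_{W_{\text{fer}}}\otimes H_{W_{\text{fer}}}$: the inner product collapses to $\la f_\xv, f_{\xv'}\ra = |F|^{-1}\Tr(W_{\text{fer}}(\xv)^*W_{\text{fer}}(\xv'))$, and for $\xv\ne\xv'$ the operator $W_{\text{fer}}(\xv)^*W_{\text{fer}}(\xv')$ is a unimodular scalar multiple of a nontrivial Majorana product, which in the Pauli realization is a tensor product containing at least one of $X$, $Y$, $Z$, hence traceless. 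Since $|G|=|F|^2$ matches the ambient dimension, orthogonality promotes these to an ONB.

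Next I would redo the eigenvalue calculation of the Weyl case verbatim. Using $W_{\text{fer}}(\yv)W_{\text{fer}}(\xv) = \sigma_{\text{fer}}(\yv,\xv)\overline{\sigma_{\text{fer}}(\xv,\yv)}W_{\text{fer}}(\xv)W_{\text{fer}}(\yv)$, the same three-line computation gives
\[
[\overline{W_{\text{fer}}}(\yv)\otimes W_{\text{fer}}(\yv)]f_\xv = \overline{\sigma_{\text{fer}}(\xv,\yv)}\sigma_{\text{fer}}(\yv,\xv)f_\xv = (-1)^{\xv^t(\Delta+\Delta^t)\yv}f_\xv,
\]
exhibiting $\overline{W_{\text{fer}}}\tp W_{\text{fer}}\cong\bigoplus_{\xv\in G}\varphi_\xv$ with $\varphi_\xv(\yv)=(-1)^{\xv^t(\Delta+\Delta^t)\yv}$. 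For Theorem \ref{thm-classification} to apply, these characters must be pairwise distinct; equivalently, the bilinear form $(\xv,\yv)\mapsto\xv^t(\Delta+\Delta^t)\yv$ on $\z_2^{2n}$ must be non-degenerate. Modulo $2$, $\Delta+\Delta^t$ equals $J-I$ (all-ones matrix minus identity), whose kernel consists of vectors with all entries equal; the nonzero candidate $\mathbf{1}$ is ruled out because its image $(2n-1,\dots,2n-1)\equiv(1,\dots,1)\not\equiv 0\pmod 2$. Hence the kernel is trivial and the multiplicity-free hypothesis is satisfied.

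Finally, since $\z_2^{2n}$ is a finite group, hence of Kac type, Theorem \ref{thm-classification}(4) identifies the extreme points of ${\rm CPTPCov}_{\z_2^{2n}}(W_{\text{fer}},W_{\text{fer}})$ with the CG-maps $\{\Phi^{W_{\text{fer}}\to W_{\text{fer}}}_{\varphi_\xv}\}_{\xv\in G}$, and Theorem \ref{thm-Choi} presents their CJ-matrices as scalar multiples of the rank-one projections $|f_\xv\ra\la f_\xv|$. A direct computation confirms $C_{{\rm Ad}_{W_{\text{fer}}(\xv)}}=|F|\cdot|f_\xv\ra\la f_\xv|$, identifying each extremal CG-map with a Majorana conjugation channel and yielding the claimed convex decomposition. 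The only genuinely new ingredient beyond the Weyl argument is the non-degeneracy check above; everything else is either the general machinery of Sections~3--4 or a line-for-line repetition of the Weyl proof.
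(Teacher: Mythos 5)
Your proposal is correct and follows essentially the same route as the paper: repeat the Weyl-system computation verbatim using the signed-permutation action $W_{\text{fer}}(\xv)\delta_y = D_{\xv,y}\delta_{\tau(y)}$ to get the multiplicity-free decomposition of $\overline{W_{\text{fer}}}\tp W_{\text{fer}}$ into the characters $\varphi_\xv$, then invoke the Kac-type classification (Theorems \ref{thm-Choi} and \ref{thm-classification}) and the identification $C_{{\rm Ad}_{W_{\text{fer}}(\xv)}}=|F|\,|f_\xv\ra\la f_\xv|$. The only difference is that you make explicit two checks the paper leaves implicit, namely the orthonormality of $\{f_\xv\}$ via tracelessness of nontrivial Majorana products and the nondegeneracy of $\Delta+\Delta^t$ over $\z_2$ guaranteeing the characters are pairwise distinct, which is a welcome but not essentially different addition.
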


\emph{Acknowledgements}: S-G. Youn was funded by the New Faculty Startup Fund from Seoul National University and by the National Research Foundation of Korea (NRF) grant funded by the Korea government (MSIT) (No. 2020R1C1C1A01009681). H.H. Lee was supported by the Basic Science Research Program through the National Research Foundation of Korea (NRF) Grant NRF-2017R1E1A1A03070510 and the National Research Foundation of Korea (NRF) Grant funded by the Korean Government (MSIT) (Grant No.2017R1A5A1015626).

\bibliographystyle{alpha}
\bibliography{LY20}

\end{document}